\crefname{figure}{Fig.}{Fig.}
\newtheorem{theorem}{Theorem}
\newtheorem{corollary}{Corollary}[theorem]
\newtheorem{lemma}{Lemma}
\theoremstyle{definition}
\theoremstyle{theorem}
\theoremstyle{definition}
\newtheorem{remark}{Remark}
\begin{document}

\title{Analysis of quantum Krylov algorithms with errors}% Force line breaks with \\

\author{William Kirby}
\email{william.kirby@ibm.com}
\affiliation{
IBM Quantum, IBM Research Cambridge, Cambridge, MA 02142, USA
}

\begin{abstract}

This work provides a nonasymptotic error analysis of quantum Krylov algorithms based on real-time evolutions, subject to generic errors in the outputs of the quantum circuits.
We prove upper and lower bounds on the resulting ground state energy estimates, and the error associated to the upper bound is linear in the input error rates.
This resolves a misalignment between known numerics, which exhibit approximately linear error scaling, and prior theoretical analysis, which only provably obtained scaling with the error rate to the power $\frac{2}{3}$.
Our main technique is to express generic errors in terms of an effective target Hamiltonian studied in an effective Krylov space.
These results provide a theoretical framework for understanding the main features of quantum Krylov errors.

\end{abstract}

\maketitle

\section{Introduction}
\label{intro}

In the last few years, quantum subspace diagonalization has emerged as a promising option for approximating ground state energies on quantum computers~\cite{mcclean2017subspace,colless2018computation,parrish2019filterdiagonalization,motta2020qite_qlanczos,takeshita2020subspace,huggins2020nonorthogonal,stair2020krylov,urbanek2020chemistry,kyriienko2020quantum,cohn2021filterdiagonalization,yoshioka2021virtualsubspace,epperly2021subspacediagonalization,seki2021powermethod,bespalova2021hamiltonian,cortes2022krylov,klymko2022realtime,jamet2022greens,baek2023nonorthogonal,lee2023sampling,zhang2023measurementefficient,kirby2023exactefficient,shen2023realtimekrylov,yang2023dualgse,yang2023shadow,ohkura2023leveraging,tkachenko2024davidson,yoshioka2024diagonalization,motta2023subspace}.
Quantum subspace diagonalization refers to quantum algorithms that calculate the projection of a target Hamiltonian into a low-dimensional subspace of the full Hilbert space.
The projection is then classically diagonalized to obtain the lowest energy in the subspace.
The accuracy of the resulting approximate ground state energy depends on the choice of subspace, for which many options have been proposed (see citations above).
For a recent review of quantum subspace methods, see~\cite{motta2023subspace}.

In this work, we focus on subspaces that are Krylov spaces, meaning that they are spanned by powers of some operator applied to a reference state~\cite{parrish2019filterdiagonalization,motta2020qite_qlanczos,stair2020krylov,urbanek2020chemistry,cohn2021filterdiagonalization,epperly2021subspacediagonalization,seki2021powermethod,bespalova2021hamiltonian,klymko2022realtime,jamet2022greens,lee2023sampling,kirby2023exactefficient,shen2023realtimekrylov,tkachenko2024davidson,yoshioka2024diagonalization}.
Even more specifically, we focus on Krylov spaces spanned by powers of a real-time evolution~\cite{parrish2019filterdiagonalization,stair2020krylov,urbanek2020chemistry,cohn2021filterdiagonalization,epperly2021subspacediagonalization,bespalova2021hamiltonian,klymko2022realtime,jamet2022greens,lee2023sampling,shen2023realtimekrylov,tkachenko2024davidson,yoshioka2024diagonalization}.
Real-time evolutions are natural to construct on a quantum computer, and such constructions have been extensively studied (e.g.,~\cite{lloyd1996quantumsimulators,childs2021trotter,aharonov2003adiabatic,childs2003quantumwalk,berry2007sparse,childs2010quantumwalk,childs2012lcu,berry2014exponentialimprovement,berry2015nearlyoptimal,low2017signalprocessing,hhkl2018,low2018interactionpicture,tran2019locality,low2019qubitization,berry2020timedependent}).
We refer to the corresponding subspace diagonalization method as the \emph{quantum Krylov algorithm}; this method has been demonstrated experimentally on up to 56 qubits~\cite{yoshioka2024diagonalization}.

These real-time Krylov spaces are also advantageous because the resulting approximate ground state energies possess analytic convergence bounds~\cite{epperly2021subspacediagonalization}, like the classical Krylov space spanned by powers of the target Hamiltonian~\cite{paige1971computation,kaniel1966linearalgebra,saad1980lanczos}.
The seminal error analysis in~\cite{epperly2021subspacediagonalization} also showed that error bounds can be obtained even in the presence of noise.
However, these error bounds rely on a number of quite stringent assumptions, and have some suboptimal features: in particular, sublinear dependence of the energy error on the noise rate.
In the present work, we resolve some of the less desirable properties of the analysis in~\cite{epperly2021subspacediagonalization}, in particular requiring only a few, relatively weak assumptions on the noise, and subject to these, obtaining linear dependence on noise rate for the upper bound.
The lower bound is weaker but can be tightened in a tradeoff with the upper bound.

\subsection{Real-time quantum Krylov algorithm}

As noted above, the real-time quantum Krylov algorithm is based on projecting a Hamiltonian $H$ of interest into the Krylov space spanned by its real-time evolutions applied to some initial reference state.
This nonorthogonal basis may be written
\begin{equation}
\label{exact_krylov_space}
    \textbf{V}=[e^{-idH\,dt}|\psi_0\rangle,~e^{-i(d-1)H\,dt}|\psi_0\rangle,...,~e^{idH\,dt}|\psi_0\rangle]
\end{equation}
for some timestep $dt$, and total Krylov dimension
\begin{equation}
    D=2d+1.
\end{equation}
In the ideal case, we find the lowest energy in this subspace by finding the least eigenvalue of
\begin{equation}
\label{mat_pencil}
    (\textbf{H},\textbf{S})=(\textbf{V}^\dagger H\textbf{V}, \textbf{V}^\dagger\textbf{V}),
\end{equation}
i.e., by solving
\begin{equation}
\label{gep}
    \textbf{H}v=\lambda\textbf{S}v.
\end{equation}
Note that $\textbf{H},\textbf{S}$ are $D\times D$ Hermitian matrices and $\textbf{S}$ is positive semidefinite.
In practice, $(\textbf{H},\textbf{S})$ are evaluated elementwise using
\begin{equation}
\label{ideal_matrix_elements}
\begin{split}
    \textbf{H}_{jk}&=\langle\psi_0|e^{-ijH\,dt}He^{ikH\,dt}|\psi_0\rangle,\\
    \textbf{S}_{jk}&=\langle\psi_0|e^{-ijH\,dt}e^{ikH\,dt}|\psi_0\rangle,
\end{split}
\end{equation}
followed by solving the generalized eigenvalue problem \eqref{gep}.
Note that $i$ is the imaginary number throughout, and is never used as an index.

Solving the generalized eigenvalue problem \eqref{gep} requires regularizing it, since in practice $\textbf{S}$ is nearly singular, meaning that under errors due to noise it in general is ill-conditioned and may even fail to be positive semidefinite.
Previous works have suggested regularizing the generalized eigenvalue problem by projecting out dimensions corresponding to eigenvectors of $\textbf{S}$ with eigenvalues smaller than some threshold $\epsilon>0$~\cite{klymko2022realtime,epperly2021subspacediagonalization,kirby2023exactefficient}. 
We focus on this \emph{thresholding} technique here.
Another option is adding small multiples of the identity to $\textbf{S}$ (and possibly also $\textbf{H}$)~\cite{zhang2023measurementefficient}, which is a form of Tikhonov regularization.

\subsection{Motivation and main results}
\label{main_results}

A typical qualitative argument that is made to explain the noise resilience of quantum Krylov algorithms is that the noise just disturbs the Krylov space, and we still find the lowest energy in that noisy subspace.
This is true if the time evolutions in \eqref{ideal_matrix_elements} are subject to some errors like Trotter approximation, provided the matrix elements are evaluated exactly as written in \eqref{ideal_matrix_elements}.
In this case, one will still obtain valid (i.e., variational) energies.
This is a potential advantage of the quantum Krylov algorithm compared to other ground state energy estimation methods, particularly those based on extracting energy eigenvalues purely from time evolutions.

However, the above argument fails if $(\textbf{H},\textbf{S})$ are subject to some more general errors that cannot be expressed as just a disturbance of the Krylov space.
Examples of this include finite sample noise or other constructions than the exact one shown in \eqref{ideal_matrix_elements}, when the time-evolutions are approximate.
The failure of this ``disturbance of the Krylov space'' argument provided the inspiration for the main idea in this work.
We show in \cref{noise_as_perturbation} that generic errors can be modeled as a disturbance of the Krylov space together with a disturbance of the Hamiltonian itself.
This disturbance yields an effective Hamiltonian corresponding to the noisy matrix pair.
Given this, there is a path towards bounding energy errors in the general case using an argument similar to the one above, but now additionally accounting for the error in the effective Hamiltonian.

We pursue this goal in \cref{lower_bound_sec,upper_bound_sec}, which build up to lower and upper bounds (respectively) on the signed energy error. 
The lower bound serves to limit the amount of violation of variationality that is possible in a noisy quantum Krylov algorithm.
However, it is the weaker of the two results, since in order to be useful it requires a larger choice of regularization threshold than is typically found to be optimal.
We include it in spite of this as an illustration of the application of the technique of modeling error via an effective Hamiltonian and Krylov space, and also as motivation for future work.

The main result is the upper bound in \cref{complete_bound_thm}, which is the culmination of \cref{upper_bound_sec}.
A useful special case is given in \eqref{approx_proj_bound_gs}, which corresponds to a particular choice of the free parameters in \cref{complete_bound_thm}.
The notable features of \cref{complete_bound_thm} are:
\begin{enumerate}
    \item The bound is linear in the input error rate. The theorem is nonasymptotic, but we give a simplified asymptotic version of it below. Let $\eta$ be the spectral norm of the errors in $\textbf{H},\textbf{S}$, let $|\gamma_0|^2$ be the initial state's probabilistic overlap with the true ground state, and let $\Delta$ be the spectral gap of $H$.
    Then the signed error in the ground state energy estimate is upper bounded as
    \begin{equation}
    \label{asymptotic_bound}
        \hspace{-0.15in}\text{energy error}\le O\left(\frac{\left(\frac{1}{\Delta}+D\right)\eta+(1+\beta)^{-2d}}{|\gamma_0'|^2}\right),
    \end{equation}
    where $\beta=\Theta(\Delta)$.
    Note that the explicit dependence of \eqref{asymptotic_bound} on the effective spectral gap can be removed, but we use the special case in \eqref{asymptotic_bound} for simplicity.
    The main point is that \eqref{asymptotic_bound} is linear in the noise rate $\eta$.
    
    \item The assumptions required for \cref{complete_bound_thm} are extremely weak. Essentially they state that...
    \begin{enumerate}
        \item the initial state overlap $|\gamma_0|^2$ must be sufficiently large that it is not overwhelmed by either the noise or the thresholding procedure;
        \item the theorem does not hold in a particular regime where the bound would be larger than the Hamiltonian norm anyway.
    \end{enumerate}
\end{enumerate}

The elephant in the room is of course that this error bound is only one-sided, and as mentioned previously, the lower bound is weak when the threshold is chosen $\epsilon=O(\eta)$ as above.
In practice, thresholds $\epsilon=O(\eta)$ are not typically found to lead to large negative fluctuations and violations of variationality (see~\cite{klymko2022realtime,kirby2023exactefficient} and \cref{numerics}), so one may hope that the lower bound can be improved.
However, even the present bound can be tightened by choosing $\epsilon=O(\sqrt{\eta})$: in this case, \cref{lower_bound_thm} yields a lower bound that is $O(\sqrt{\eta})$, and \cref{complete_bound_thm} also yields an upper bound that is $O(\sqrt{\eta})$.
This would represent a more conservative approach to thresholding a noisy Krylov algorithm, sacrificing the tighter upper bound in order to limit violations of variationality.
In practice, one might be able to use heuristics to detect fluctuations due to ill-conditioning, and choose the optimal threshold in this way.

\section{Noise as error in subspace and Hamiltonian}
\label{noise_as_perturbation}

Suppose $(\textbf{H},\textbf{S})$ are calculated with errors, yielding some faulty matrices $(\textbf{H}',\textbf{S}')$.
One motivating example is
\begin{equation}
\label{approx_matrix_elements}
\begin{split}
    \textbf{H}_{jk}'&=\langle\psi_0|\text{PF}(k-j)H|\psi_0\rangle,\\
    \textbf{S}_{jk}'&=\langle\psi_0|\text{PF}(k-j)|\psi_0\rangle,
\end{split}
\end{equation}
where $\text{PF}(k-j)$ is a product formula approximation to $e^{i(k-j)H\,dt}$.
Note that if $\text{PF}(k-j)$ were the exact time evolution that it approximates, then \eqref{approx_matrix_elements} would be equivalent to \eqref{ideal_matrix_elements}, since exact time evolutions commute with $H$.
However, once the time evolutions are approximated, the two expressions are no longer equivalent.
Another unavoidable source of error is estimation of the above matrix elements with a finite number of samples.
In the analysis that follows, we do not assume any particular source for the errors, merely quantifying them as $\|\textbf{H}'-\textbf{H}\|$ and $\|\textbf{S}'-\textbf{S}\|$, and obtaining bounds in terms of these.

Our main technique in this paper is to express the noisy matrix pair $(\textbf{H}',\textbf{S}')$ in terms of an effective Hamiltonian $H'$ and an effective Krylov basis, whose vectors form the columns of a matrix $\textbf{V}'$.
Naively, one might hope to write
\begin{equation}
\label{noisy_mat_pencil}
    (\textbf{H}',\textbf{S}')=(\textbf{V}'^\dagger H'\textbf{V}', \textbf{V}'^\dagger\textbf{V}'),
\end{equation}
in direct analogy to \eqref{mat_pencil}.
However, an immediate obstacle is that the faulty overlap matrix $\textbf{S}'$ may not be positive semidefinite (p.s.d.), which is a necessary and sufficient condition for representing it in the form in \eqref{noisy_mat_pencil}.
Only in some special cases is $\textbf{S}'$ guaranteed to be p.s.d., e.g., if $\textbf{S}'$ is constructed as in \eqref{approx_matrix_elements} and
\begin{equation}
    \text{PF}(k-j)=(U')^{k-j}
\end{equation}
for some $U'$.
This is the case where, for each $k-j$, $\text{PF}(k-j)$ is obtained as $k-j$ repetitions of some fixed step $U'$, and consequently we could take the Krylov vectors to be $(U')^j|\psi_0\rangle$.

However, a generic $\textbf{S}'$ constructed as in \eqref{approx_matrix_elements}, such as when $\text{PF}(k-j)$ is obtained by a fixed number of Trotter steps whose evolution times scale with $k-j$, is not guaranteed to be p.s.d. and in general turns out not to be.
The effects of finite sample and device noise additionally do not preserve positive semidefiniteness.
Let us still assume at least that $\textbf{S}'$ is Hermitian.
Even if $\text{PF}(j-k)\neq \text{PF}(k-j)^\dagger$, Hermitianity of $\textbf{S}'$ can be enforced by only using \eqref{approx_matrix_elements} to calculate the matrix elements on and above the diagonal, and obtaining the below-diagonal matrix elements as conjugates of their transposes.

A non-p.s.d.~$\textbf{S}'$ cannot be expressed in the form $\textbf{V}'^\dagger\textbf{V}'$.
However, in order to solve the generalized eigenvalue problem \eqref{gep} for the noisy matrix pair $(\textbf{H}',\textbf{S}')$, we will have to regularize the problem, since it cannot be solved numerically unless $\textbf{S}'$ is well-conditioned.
As discussed in \cref{intro}, we will accomplish this by removing eigenspaces of $\textbf{S}'$ whose eigenvalues lie below some threshold $\epsilon>0$, from both $\textbf{H}'$ and $\textbf{S}'$.
We will then solve the generalized eigenvalue problem for the resulting (lower dimensional) pair $\big(\widetilde{\textbf{H}}',\widetilde{\textbf{S}}'\big)$.
Below, we will refer to this process as ``thresholding at $\epsilon$.''
We will discuss the details of this and the resulting analysis below.
For now we observe that it means that we will ultimately be solving a generalized eigenvalue problem with an overlap matrix $\widetilde{\textbf{S}}'$ that is positive definite with least eigenvalue lower bounded by $\epsilon$, the regularization threshold.

For the purpose of the analysis, it is convenient to introduce yet another intermediate matrix $\textbf{S}''$, which is obtained from $\textbf{S}'$ by replacing all eigenvalues of $\textbf{S}'$ that are below $\epsilon$ with $0$, preserving the same eigenvectors.
This is useful because $\textbf{S}''$ is p.s.d.~by construction, but it still has the same dimensions as $\textbf{S}'$, and the lower dimensional, thresholded overlap matrix $\widetilde{\textbf{S}}'$ could be obtained by removing the null space of $\textbf{S}''$.
Since $\textbf{S}''$ is p.s.d., we can express it as $\textbf{V}'^\dagger\textbf{V}'$ for some $\textbf{V}'$.

We can define $\textbf{S}''$ formally by letting $\Pi'$ denote the projector onto eigenspaces of $\textbf{S}'$ with eigenvalues above $\epsilon$: then
\begin{equation}
\label{Sprimeprime_def}
    \textbf{S}''\coloneqq\Pi'\textbf{S}'\Pi'.
\end{equation}
Since thresholding also requires projecting the corresponding dimensions out of $\textbf{H}'$, we also introduce
\begin{equation}
\label{Hprimeprime_def}
    \textbf{H}''\coloneqq\Pi'\textbf{H}'\Pi'.
\end{equation}
This new matrix pair $(\textbf{H}'',\textbf{S}'')$ is equal to the noisy, thresholded matrix pair $\big(\widetilde{\textbf{H}}',\widetilde{\textbf{S}}'\big)$ up to padding by extra dimensions (all zeroes, corresponding to the dimensions projected out by $\Pi'$).
Hence their energies $\widetilde{E}_i$ are the same, although $(\textbf{H}'',\textbf{S}'')$ is a purely theoretical construction whose generalized eigenvalue problem could not actually be solved in practice.\footnote{Since $\textbf{S}''$ is singular unless no eigenvalues of $\textbf{S}'$ are below $\epsilon$, i.e., no dimensions are removed by thresholding.}
These are the energies that would come out of the noisy, thresholded quantum algorithm.

In the analysis that follows, we will obtain both upper and lower bounds on the energy estimates resulting from a Krylov matrix pair with errors.
We will twice have the opportunity to illustrate the type of scheme described above, i.e., representing the errors in terms of an effective Krylov basis and Hamiltonian, since we will use different effective bases and Hamiltonians for the lower and upper bounds.
More broadly, one may hope that this approach to analyzing matrix pairs can be useful in other contexts.

\section{Lower bound on energy error}
\label{lower_bound_sec}

\subsection{Effective Krylov space and Hamiltonian}
\label{effective_for_lower}

As discussed in \cref{noise_as_perturbation}, we begin by expressing the noisy matrix pair $(\textbf{H}',\textbf{S}')$ in terms of an effective Krylov basis and an effective Hamiltonian.
To construct an effective Krylov basis whose overlap matrix is $\textbf{S}''$, we can begin by diagonalizing both $\textbf{S}'$ and the original overlap matrix $\textbf{S}$, via unitaries $Q'$ and $Q$, respectively.
Note that to execute the quantum Krylov algorithm we do not actually need to perform this diagonalization in practice, which would be impossible for the unknown ideal matrix $\textbf{S}$.
We are only concerned with demonstrating existence of an effective Krylov basis with certain properties.

Let
\begin{equation}
\label{diagonalized_Ss}
\begin{split}
    \Lambda&\coloneqq Q^\dagger\textbf{S}Q=\text{diag}(\lambda_0,\lambda_1,...,\lambda_{D-1}),\\
    \Lambda'&\coloneqq Q'^\dagger\textbf{S}'Q'=\text{diag}(\lambda_0',\lambda_1',...,\lambda_{D-1}'),
\end{split}
\end{equation}
where $\lambda_i,\lambda_i'$ are the eigenvalues of $\textbf{S},\textbf{S}'$, respectively, in weakly increasing order.
By definition \eqref{Sprimeprime_def}, $Q$ also diagonalizes $\textbf{S}''$; define $\Lambda''$ to be the corresponding diagonal matrix of eigenvalues.

Next, let $\sqrt{\textbf{S}''}$ and $\sqrt{\textbf{S}}$ denote the Hermitian square-roots of $\textbf{S}''$ and $\textbf{S}$, i.e.,
\begin{equation}
\begin{split}
    &\sqrt{\textbf{S}}=Q\sqrt{\Lambda}Q^\dagger,\\
    &\sqrt{\textbf{S}''}=Q'\sqrt{\Lambda''}Q'^\dagger.
\end{split}
\end{equation}
Since $\textbf{S}$ is the Gram matrix of $\textbf{V}$, the polar decomposition of $\textbf{V}$ is
\begin{equation}
\label{V_polar}
    \textbf{V}=F\sqrt{\textbf{S}}
\end{equation}
for some matrix $F$ with orthonormal columns.
This implies that $F^\dagger F=\mathds{1}_{D\times D}$, so if we define our effective Krylov basis $\textbf{V}'$ to be
\begin{equation}
\label{Vprime_def_lower}
    \textbf{V}'\coloneqq FG\sqrt{\textbf{S}''}
\end{equation}
for any $D\times D$ unitary $G$, then
\begin{equation}
\label{Vprimeprime_condition}
    \textbf{V}'^\dagger\textbf{V}'=\sqrt{\textbf{S}''}G^\dagger F^\dagger FG\sqrt{\textbf{S}''}=\textbf{S}''
\end{equation}
as desired.
We leave $G$ arbitrary for now.

We now move on to $\textbf{H}''$.
The matrix $\textbf{V}'$ of effective Krylov vectors forms a $D$-dimensional coordinate system.
We want an effective Hamiltonian $H'$ whose block in the subspace spanned by $\textbf{V}'$ is $\textbf{H}''$, i.e., we require that $\textbf{V}'^\dagger H'\textbf{V}'=\textbf{H}''$.
The remainder of $H'$ we can take to be equal to the corresponding part of $H$, since it is outside of the Krylov space and hence will play no role in our calculations.

A corresponding expression for $H'$ is
\begin{equation}
\label{Hprime_def}
    H'=H+\textbf{V}'\textbf{S}''^+\left(\textbf{H}'-\textbf{V}'^\dagger H\textbf{V}'\right)\textbf{S}''^+\textbf{V}'^\dagger,
\end{equation}
where $\textbf{S}''^+$ denotes the Moore-Penrose pseudoinverse of $\textbf{S}''$.
To see that this expression yields the desired relation $\textbf{V}'^\dagger H'\textbf{V}'=\textbf{H}''$, we conjugate \eqref{Hprime_def} by $\textbf{V}'$:
\begin{equation}
\begin{split}
    \textbf{V}'^\dagger H'\textbf{V}'&=\textbf{V}'^\dagger H\textbf{V}'+\textbf{S}''\textbf{S}''^+\left(\textbf{H}'-\textbf{V}'^\dagger H\textbf{V}'\right)\textbf{S}''^+\textbf{S}''\\
    &=\textbf{V}'^\dagger H\textbf{V}'+\Pi'\left(\textbf{H}'-\textbf{V}'^\dagger H\textbf{V}'\right)\Pi'\\
    &=\Pi'\textbf{H}'\Pi'\\
    &=\textbf{H}'',
\end{split}
\end{equation}
where the first line uses \eqref{Vprimeprime_condition}, the second line follows because ${\textbf{S}''^+\textbf{S}''=\textbf{S}''\textbf{S}''^+=\Pi'}$, the third line follows because $\textbf{V}'\Pi'=\textbf{V}'$ (by \eqref{Vprime_def_lower} and the fact that $\sqrt{\textbf{S}''}\Pi'=\sqrt{\textbf{S}''}$), and the last line follows by \eqref{Hprimeprime_def}.
Hence, with the above choices of $\textbf{V}'$ and $H'$, we have
\begin{equation}
\label{perturbed_mat_pencil}
    (\textbf{H}'',\textbf{S}'')=(\textbf{V}'^\dagger H'\textbf{V}', \textbf{V}'^\dagger\textbf{V}'),
\end{equation}
which has the same spectrum as the thresholded problem $(\widetilde{\textbf{H}}',\widetilde{\textbf{S}}')$.
Some caution is required because, as noted above, the matrix pair $(\textbf{H}'',\textbf{S}'')$ is singular: by ``has the same spectrum as the thresholded problem,'' we mean that the well-defined energies of $(\textbf{H}'',\textbf{S}'')$ are equal to the spectrum of the thresholded problem.
With this understood, we can think of the thresholded problem as studying the effective Hamiltonian $H'$ in the effective Krylov space $\text{span}(\textbf{V}')$.

We now want to bound the difference between the effective Hamiltonian $H'$ and exact Hamiltonian $H$:
\begin{theorem}
\label{Hprime_diff_thm_lower}
    Let the unitary $G$ in the definition \eqref{Vprime_def_lower} of $\textbf{V}\,'$ be defined such that
    \begin{equation}
    \label{G_def}
        \sqrt{\textbf{S}}\,\Pi'=G\sqrt{\Pi'\textbf{S}\,\Pi'}
    \end{equation}
    is the polar decomposition of $\sqrt{\textbf{S}}\,\Pi'$.
    Assume that $\|\textbf{S}\,'-\textbf{S}\,\|\le\epsilon$.
    Then for $H'$ as defined in \eqref{Hprime_def},
    \begin{equation}
    \label{Hprime_diff_lower}
        \|H'-H\|\le\frac{\|\textbf{H}\,'-\textbf{H}\,\|+(1+\sqrt{2})\|\textbf{S}\,'-\textbf{S}\,\|\|H\|}{\epsilon}.
    \end{equation}
\end{theorem}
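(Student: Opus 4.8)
The plan is to work from the closed form \eqref{Hprime_def}: subtract $H$, then substitute the factorizations $\textbf{V}=F\sqrt{\textbf{S}}$ of \eqref{V_polar} and $\textbf{V}'=FG\sqrt{\textbf{S}''}$ of \eqref{Vprime_def_lower}. Because the nonzero eigenvalues of $\textbf{S}''$ are the eigenvalues of $\textbf{S}'$ above $\epsilon$, in particular $\ge\epsilon$, the range of $\textbf{S}''$ coincides with that of $\Pi'$, so $\sqrt{\textbf{S}''}\,\textbf{S}''^+=\sqrt{\textbf{S}''}^+$, $\ \sqrt{\textbf{S}''}^+\sqrt{\textbf{S}''}=\Pi'$, and $\sqrt{\textbf{S}''}^+\Pi'=\sqrt{\textbf{S}''}^+$. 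Using these, $H'-H=FG\,\sqrt{\textbf{S}''}^+\big(\textbf{H}'-\textbf{V}'^\dagger H\textbf{V}'\big)\sqrt{\textbf{S}''}^+G^\dagger F^\dagger$, and since $F^\dagger F=\mathds{1}$ and $G$ is unitary, $\|FG\|=1$, so $\|H'-H\|\le\big\|\sqrt{\textbf{S}''}^+\big(\textbf{H}'-\textbf{V}'^\dagger H\textbf{V}'\big)\sqrt{\textbf{S}''}^+\big\|$. Writing $\textbf{H}'-\textbf{V}'^\dagger H\textbf{V}'=(\textbf{H}'-\textbf{H})+(\textbf{H}-\textbf{V}'^\dagger H\textbf{V}')$ and using $\|\sqrt{\textbf{S}''}^+\|\le\epsilon^{-1/2}$, the first piece contributes at most $\|\textbf{H}'-\textbf{H}\|/\epsilon$, which is the first term of \eqref{Hprime_diff_lower}; it then remains to bound $\big\|\sqrt{\textbf{S}''}^+(\textbf{H}-\textbf{V}'^\dagger H\textbf{V}')\sqrt{\textbf{S}''}^+\big\|$ by $(1+\sqrt2)\|\textbf{S}'-\textbf{S}\|\,\|H\|/\epsilon$.

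For that, write $\textbf{H}=\textbf{V}^\dagger H\textbf{V}=\sqrt{\textbf{S}}\,M\sqrt{\textbf{S}}$ with $M\coloneqq F^\dagger HF$ (so $\|M\|\le\|H\|$) and $\textbf{V}'^\dagger H\textbf{V}'=\sqrt{\textbf{S}''}\,G^\dagger MG\,\sqrt{\textbf{S}''}$. Conjugating the second by $\sqrt{\textbf{S}''}^+$ yields $\Pi' N\Pi'$ with $N\coloneqq G^\dagger MG$, $\|N\|\le\|H\|$. For the first, sandwich $\textbf{H}$ with $\Pi'$ (via $\sqrt{\textbf{S}''}^+=\sqrt{\textbf{S}''}^+\Pi'=\Pi'\sqrt{\textbf{S}''}^+$) and apply the defining property \eqref{G_def}, $\sqrt{\textbf{S}}\,\Pi'=GR$ with $R\coloneqq\sqrt{\Pi'\textbf{S}\,\Pi'}$, hence $\Pi'\sqrt{\textbf{S}}=RG^\dagger$; this gives $\sqrt{\textbf{S}''}^+\textbf{H}\sqrt{\textbf{S}''}^+=\sqrt{\textbf{S}''}^+R\,G^\dagger MG\,R\sqrt{\textbf{S}''}^+=C^\dagger N C$ with $C\coloneqq R\sqrt{\textbf{S}''}^+$. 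So the quantity to bound is $\|C^\dagger N C-\Pi' N\Pi'\|$, and the crux is that $C$ is close to $\Pi'$. Since $\Pi'=\sqrt{\textbf{S}''}\,\sqrt{\textbf{S}''}^+$, one has $C-\Pi'=(R-\sqrt{\textbf{S}''})\sqrt{\textbf{S}''}^+$, so $\|C-\Pi'\|\le\|R-\sqrt{\textbf{S}''}\|/\sqrt\epsilon$; and because $R^2=\Pi'\textbf{S}\,\Pi'$ and $(\sqrt{\textbf{S}''})^2=\Pi'\textbf{S}'\Pi'$ differ in norm by at most $\|\textbf{S}'-\textbf{S}\|$ while $\textbf{S}''$ has least nonzero eigenvalue $\ge\epsilon$, the square-root perturbation inequality $\|\sqrt A-\sqrt B\|\le\|A-B\|\big/\big(\sqrt{\lambda_{\min}A}+\sqrt{\lambda_{\min}B}\big)$ — provable from the Sylvester identity $(\sqrt A-\sqrt B)\sqrt A+\sqrt B(\sqrt A-\sqrt B)=A-B$, applied on the range of $\Pi'$ — gives $\|R-\sqrt{\textbf{S}''}\|\le\|\textbf{S}'-\textbf{S}\|/\sqrt\epsilon$, whence $\|C-\Pi'\|\le\|\textbf{S}'-\textbf{S}\|/\epsilon$. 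I would also record $\|C\|\le\sqrt2$: indeed $C^\dagger C=\sqrt{\textbf{S}''}^+(\Pi'\textbf{S}\,\Pi')\sqrt{\textbf{S}''}^+\preceq\sqrt{\textbf{S}''}^+(\textbf{S}''+\epsilon\Pi')\sqrt{\textbf{S}''}^+=\Pi'+\epsilon\,\textbf{S}''^+\preceq2\Pi'$, using $\Pi'\textbf{S}\,\Pi'\preceq\textbf{S}''+\|\textbf{S}-\textbf{S}'\|\Pi'\preceq\textbf{S}''+\epsilon\Pi'$ and $\textbf{S}''^+\preceq\Pi'/\epsilon$. Finally, from $C^\dagger N C-\Pi' N\Pi'=C^\dagger N(C-\Pi')+(C-\Pi')^\dagger N\Pi'$ one gets $\|C^\dagger N C-\Pi' N\Pi'\|\le(\|C\|+1)\|N\|\,\|C-\Pi'\|\le(1+\sqrt2)\|H\|\,\|\textbf{S}'-\textbf{S}\|/\epsilon$; adding this to $\|\textbf{H}'-\textbf{H}\|/\epsilon$ gives \eqref{Hprime_diff_lower}.

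The step I expect to be the real obstacle is not any individual estimate but arranging the algebra so it closes with a \emph{linear} dependence on $\|\textbf{S}'-\textbf{S}\|$: one must see that the particular polar choice of $G$ in \eqref{G_def} is exactly what makes $\sqrt{\textbf{S}''}^+\textbf{H}\sqrt{\textbf{S}''}^+$ equal $C^\dagger N C$ with $C$ within $O(\|\textbf{S}'-\textbf{S}\|/\epsilon)$ of $\Pi'$ — a generic unitary $G$ would leave an $O(\|H\|)$ discrepancy with no small prefactor — and one must extract the sharp constant $1+\sqrt2$ by proving $\|C\|\le\sqrt2$ rather than the cruder $\|C\|\le1+\|\textbf{S}'-\textbf{S}\|/\epsilon\le2$. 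Keeping careful track of the pseudoinverses, and of the fact that $R$, $\sqrt{\textbf{S}''}$, and $C$ are singular on $\ker\Pi'$ so that the square-root perturbation bound must be read on the range of $\Pi'$, is the other place that needs care.
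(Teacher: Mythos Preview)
Your proof is correct and follows essentially the same route as the paper: both reduce to $\big\|\sqrt{\textbf{S}''}^+(\textbf{H}'-\textbf{V}'^\dagger H\textbf{V}')\sqrt{\textbf{S}''}^+\big\|$, split off the $\|\textbf{H}'-\textbf{H}\|/\epsilon$ piece, and handle the remainder via the square-root perturbation bound $\|\sqrt{\Pi'\textbf{S}\Pi'}-\sqrt{\textbf{S}''}\|\le\|\textbf{S}'-\textbf{S}\|/\sqrt{\epsilon}$ (the paper cites van~Hemmen--Ando) together with the estimate $\|\sqrt{\textbf{S}''}^+\textbf{S}\sqrt{\textbf{S}''}^+\|\le 2$. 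Your repackaging via $C=R\sqrt{\textbf{S}''}^+$ and $N=G^\dagger MG$ and the identity $C^\dagger NC-\Pi'N\Pi'=C^\dagger N(C-\Pi')+(C-\Pi')^\dagger N\Pi'$ is just a cleaner way to write the paper's two-term split $\textbf{V}^\dagger H(\textbf{V}-\textbf{V}')+(\textbf{V}-\textbf{V}')^\dagger H\textbf{V}'$; the quantities bounded and the constants obtained are identical.
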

\noindent
The proof is given in \cref{proofs_app}.
An explanation of why \eqref{G_def} is a valid polar decomposition is given in the proof.

\subsection{Lower bound}

The lower bound on the energy error from the noisy, thresholded problem follows immediately as a corollary of \cref{Hprime_diff_thm_lower}:
\begin{corollary}
\label{lower_bound_thm}
    Let $H$ be a Hamiltonian, let $(\textbf{H},\textbf{S}\,)=(\textbf{V}\,^\dagger H\textbf{V}, \textbf{V}\,^\dagger\textbf{V}\,)$ be a real-time Krylov matrix pair representing $H$ in the Krylov space $\text{span}(\textbf{V}\,)$, and let $(\textbf{H}\,',\textbf{S}\,')$ be a Hermitian approximation to $(\textbf{H},\textbf{S}\,)$.
    Let $E_0$ be the ground state energy of $H$, which we want to estimate.
    Then the energy error of lowest energy $\widetilde{E}_0$ of $(\textbf{H}\,',\textbf{S}\,')$ after thresholding at $\epsilon$ is lower bounded as
    \begin{equation}
    \label{approx_lower_bound}
        \widetilde{E}_0-E_0\ge-\frac{\|\textbf{H}\,'-\textbf{H}\,\|+(1+\sqrt{2})\|\textbf{S}\,'-\textbf{S}\,\|\|H\|}{\epsilon}.
    \end{equation} 
\end{corollary}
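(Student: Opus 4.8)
The plan is to obtain the corollary directly from \cref{Hprime_diff_thm_lower} together with the variational structure set up in \cref{effective_for_lower}. Recall that, by \eqref{perturbed_mat_pencil}, the thresholded pair $\big(\widetilde{\textbf{H}}',\widetilde{\textbf{S}}'\big)$ has the same well-defined energies as $(\textbf{H}'',\textbf{S}'')=(\textbf{V}'^\dagger H'\textbf{V}',\textbf{V}'^\dagger\textbf{V}')$, which is literally the representation of the effective Hamiltonian $H'$ in the subspace $\text{span}(\textbf{V}')$. The first step is therefore to identify $\widetilde{E}_0$ with a Rayleigh quotient: the least finite generalized eigenvalue of $(\textbf{H}'',\textbf{S}'')$ equals
\begin{equation}
    \widetilde{E}_0=\min_{\substack{|\phi\rangle\in\text{span}(\textbf{V}')\\ |\phi\rangle\neq 0}}\frac{\langle\phi|H'|\phi\rangle}{\langle\phi|\phi\rangle},
\end{equation}
where one uses $\textbf{V}'\Pi'=\textbf{V}'$ to pass between coefficient vectors in $\operatorname{range}(\Pi')$ and states in $\text{span}(\textbf{V}')$. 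I would also note that this minimization is over a nonempty set: since $\textbf{S}$ has unit diagonal, its largest eigenvalue is at least $1$, so under the hypothesis $\|\textbf{S}'-\textbf{S}\|\le\epsilon$ of \cref{Hprime_diff_thm_lower} at least one eigenvalue of $\textbf{S}'$ survives thresholding whenever $\epsilon<\tfrac12$, and the thresholded problem is well-posed.

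The second step is a short sandwich. A Rayleigh quotient of $H'$ over any subspace is at least $\lambda_{\min}(H')$, so $\widetilde{E}_0\ge\lambda_{\min}(H')$; and by Weyl's perturbation inequality $\lambda_{\min}(H')\ge\lambda_{\min}(H)-\|H'-H\|=E_0-\|H'-H\|$. Combining these gives $\widetilde{E}_0-E_0\ge-\|H'-H\|$. The third step is simply to substitute the bound of \cref{Hprime_diff_thm_lower}, which applies verbatim for the choice of the unitary $G$ specified there and under $\|\textbf{S}'-\textbf{S}\|\le\epsilon$; this turns $-\|H'-H\|$ into the right-hand side of \eqref{approx_lower_bound}.

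I do not expect a genuine obstacle here — this is why the statement is phrased as a corollary — but the one point that needs care, and which is already handled in \cref{effective_for_lower}, is that $(\textbf{H}'',\textbf{S}'')$ is a singular pencil, so ``lowest energy'' must be read as the smallest finite generalized eigenvalue, equivalently the smallest eigenvalue after compressing to $\Pi'$. Once that reading is fixed, the Rayleigh-quotient characterization above is immediate, and variationality of the Rayleigh quotient together with Weyl's inequality does the rest.
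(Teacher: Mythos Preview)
Your proposal is correct and follows essentially the same approach as the paper: identify $\widetilde{E}_0$ with the lowest energy of $(\textbf{H}'',\textbf{S}'')=(\textbf{V}'^\dagger H'\textbf{V}',\textbf{V}'^\dagger\textbf{V}')$, apply the Rayleigh--Ritz variational principle to get $\widetilde{E}_0\ge\lambda_{\min}(H')$, use Weyl's inequality for $\lambda_{\min}(H')\ge E_0-\|H'-H\|$, and substitute \cref{Hprime_diff_thm_lower}. Your additional remarks about nonemptiness of the thresholded subspace and the reading of ``lowest energy'' for the singular pencil are welcome clarifications that the paper leaves implicit.
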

\begin{proof}
    By Weyl's theorem (\cite{bhatia1997matrix}, Cor. III.2.6; see also \cref{weyl_thm} in \cref{proofs_app}), the difference between the lowest eigenvalues of $H'$ and $H$ is upper bounded by $\|H'-H\|$, which is in turn upper bounded as in \eqref{Hprime_diff_lower}.
    By the Rayleigh-Ritz variational principle, since $H'$ is the Hamiltonian corresponding to the matrix pair $(\textbf{H}'',\textbf{S}'')$, as in \eqref{perturbed_mat_pencil}, the lowest energy of $(\textbf{H}'',\textbf{S}'')$ is lower bounded by the lowest energy of $H'$.
    Finally, as explained after \eqref{perturbed_mat_pencil}, the energies of $(\textbf{H}'',\textbf{S}'')$ are the same as the energies of the noisy, thresholded problem.
    The result follows.
\end{proof}
\begin{remark}
    The error bound in \eqref{approx_lower_bound} is weak compared to numerical results, since in practice, it is typically found that the optimal threshold $\epsilon$ is of the same order as the noise rates $\|\textbf{H}\,'-\textbf{H}\|$ and $\|\textbf{S}\,'-\textbf{S}\|\|H\|$ (see also \cref{error_bounds_sec,numerics}).
    However, we have included it since it illustrates the use of the effective Krylov space and Hamiltonian technique, and also as a suggestion for future work to improve the bound.
    See \cref{numerics} for a detailed discussion.
\end{remark}

\section{Upper bound on energy error}
\label{upper_bound_sec}

\subsection{Effective Krylov space and Hamiltonian}
\label{effective_for_upper}

For the energy error upper bound, we will be considering a particular point in the Krylov space, whose energy will be the upper bound we want to obtain.
For now, we can give that point a generic label $c'$, a $D$-dimensional coordinate vector whose corresponding state in the effective Krylov space will be $\textbf{V}'c'$.
The only constraint we will put on $c'$ for now is that it lives in the range of $\textbf{S}''$, i.e., the subspace spanned by eigenvectors of $\textbf{S}'$ with eigenvalues above threshold.
We can formalize this by requiring that
\begin{equation}
\label{cprime_req}
    c'=\Pi'c'.
\end{equation}
We also require that $c'$ is nonzero.

The key observation is that since we will only consider the point $c'$, the effective Krylov space we are about to construct only needs to match $\textbf{S}''$ at that point.
In other words, if we denote the effective Krylov basis as $\textbf{V}'$, we require
\begin{equation}
\label{vprime_req}
    c'^\dagger\textbf{V}'^\dagger\textbf{V}'c'=c'^\dagger\textbf{S}''c'=c'^\dagger\textbf{S}'c',
\end{equation}
where the second equality follows by \eqref{cprime_req} and the definition \eqref{Sprimeprime_def} of $\textbf{S}''$.
Even though $\textbf{S}''$ has least eigenvalue zero, by \eqref{cprime_req} $c'^\dagger\textbf{S}''c'\neq0$ as long as $c'\neq0$, which we assumed.
Given this, a convenient choice of $\textbf{V}'$ that satisfies \eqref{vprime_req} is
\begin{equation}
\label{Vprime_def}
    \textbf{V}'\coloneqq\sqrt{\frac{c'^\dagger\textbf{S}'c'}{c'^\dagger\textbf{S}c'}}\,\textbf{V}=\sqrt{\frac{c'^\dagger\textbf{S}''c'}{c'^\dagger\textbf{S}c'}}\,\textbf{V},
\end{equation}
i.e., we choose the effective Krylov space to simply be whatever rescaling of the ideal Krylov space yields the correct length for the vector $\textbf{V}'c'$.

The second component we want is an effective Hamiltonian $H'$ that yields the noisy, thresholded Krylov matrix $\textbf{H}''$ with respect to our effective Krylov space $\text{span}(\textbf{V}')$.
Again, we only require this at the point $c'$, so we assert that
\begin{equation}
\label{Hprime_req}
    c'^\dagger\textbf{V}'^\dagger H'\textbf{V}'c'=c'^\dagger\textbf{H}''c'=c'^\dagger\textbf{H}'c',
\end{equation}
where just as in \eqref{vprime_req}, the second equality follows by \eqref{cprime_req} and the definition \eqref{Hprimeprime_def} of $\textbf{H}''$.
At all points besides $\textbf{V}'c'$, we are free to choose the value of $H'$, so we can let it be equal to $H$ elsewhere.
This yields the following form for $H'$: with $|\psi\rangle\coloneqq\textbf{V}'c'$,
\begin{equation}
\label{Hprime_forms}
\begin{split}
    H'&\coloneqq H+\left(c'^\dagger\textbf{H}''c'-c'^\dagger\textbf{V}'^\dagger H\textbf{V}'c'\right)\frac{|\psi\rangle\langle\psi|}{\langle\psi|\psi\rangle^2}\\
    &=H+\left(c'^\dagger\textbf{H}'c'-c'^\dagger\textbf{V}'^\dagger H\textbf{V}'c'\right)\frac{|\psi\rangle\langle\psi|}{\langle\psi|\psi\rangle^2}.
\end{split}
\end{equation}
Taking the expectation value of both sides with respect to $|\psi\rangle$ verifies that this form satisfies \eqref{Hprime_req}.

The final question to answer in this section is: how far are these effective objects from their ideal counterparts?
The distance ${\|\textbf{V}'-\textbf{V}\|}$ will not turn out to matter to us directly, but the distance ${\|H'-H\|}$ will, and we can bound it as follows: using the second line of \eqref{Hprime_forms} we obtain
\begin{equation}
\label{Hprime_diff}
\begin{split}
    &\|H'-H\|=\left|c'^\dagger\textbf{H}'c'-c'^\dagger\textbf{V}'^\dagger H\textbf{V}'c'\right|\left\|\frac{|\psi\rangle\langle\psi|}{\langle\psi|\psi\rangle^2}\right\|\\
    &=\frac{\left|c'^\dagger\textbf{H}'c'-c'^\dagger\textbf{V}'^\dagger H\textbf{V}'c'\right|}{\langle\psi|\psi\rangle}\\
    &\le\frac{\left|c'^\dagger\textbf{H}'c'-c'^\dagger\textbf{H}c'\right|+\left|c'^\dagger\textbf{V}^\dagger H\textbf{V}c'-c'^\dagger\textbf{V}'^\dagger H\textbf{V}'c'\right|}{\langle\psi|\psi\rangle}\\
    &\le\frac{\|c'\|^2\left\|\textbf{H}'-\textbf{H}\right\|+\left|c'^\dagger\textbf{V}^\dagger H\textbf{V}c'\left(1-\frac{c'^\dagger\textbf{S}'c'}{c'^\dagger\textbf{S}c'}\right)\right|}{\langle\psi|\psi\rangle}\\
    &\le\|c'\|^2\frac{\left\|\textbf{H}'-\textbf{H}\right\|+\|H\|\left\|\textbf{S}-\textbf{S}'\right\|}{\langle\psi|\psi\rangle},
\end{split}
\end{equation}
where the third step uses $\textbf{H}=\textbf{V}^\dagger H\textbf{V}$, the fourth step follows by inserting \eqref{Vprime_def}, and the final step follows because
\begin{equation}
    \frac{|c'^\dagger\textbf{V}^\dagger H\textbf{V}c'|}{c'^\dagger\textbf{S}c'}\le\|H\|
\end{equation}
by the Rayleigh-Ritz variational principle.

\subsection{Upper bound}
\label{error_bounds_sec}

\cref{effective_for_upper} showed that we can think of the noisy, thresholded problem $(\widetilde{\textbf{H}}\,',\widetilde{\textbf{S}}\,')$ as studying the effective Hamiltonian $H'$ in the effective Krylov space $\text{span}(\textbf{V}')$, at least at the point $c'$ (which we have yet to specify).
We now begin to work our way towards an upper bound on $\widetilde{E}_0$, the lowest energy of that noisy, thresholded problem.
First, we show that the effective Krylov space span$(\textbf{V}')$ contains an approximate ground state of the exact Hamiltonian $H$:
\begin{theorem}[partly derived from Theorem 3.1 in~\cite{epperly2021subspacediagonalization}]
\label{epperly_thm}
    Let $d$ be a positive integer defining the dimension $D=2d+1$ as above, let $\delta>0$, let $(E_k,|E_k\rangle)$ be the eigenpairs of $H$ in weakly increasing order of energy, and let $R\coloneqq E_\text{max}-E_0$ be the spectral range of $H$.
    Let
    \begin{equation}
        |\psi_0\rangle=\sum_{k=0}^{N-1}\gamma_k|E_k\rangle
    \end{equation}
    be the expansion of $|\psi_0\rangle$ in the energy eigenbasis of $H$, where $N$ is the Hilbert space dimension.
    Assume
    \begin{equation}
    \label{thm1_assumption}
        \|\textbf{S}\,'-\textbf{S}\,\|\le\epsilon.
    \end{equation}
    Then there exists an operator $P$ such that the column space of $\textbf{V}\,'$ contains a state
    \begin{equation}
    \label{ansatz_approx}
        |\psi\rangle=P|\psi_0\rangle
    \end{equation}
    and $P$ satisfies
    \begin{equation}
    \label{agsp_def_part_1}
        P|E_k\rangle=\beta'_k|E_k\rangle,
    \end{equation}
    where
    \begin{equation}
    \label{agsp_def_part_2}
        |\beta'_k|^2\le
        \begin{cases}
            2+\alpha_k\quad\text{if $E_k-E_0<\delta$},\\
            8\left(1+\frac{\pi\delta}{R}\right)^{-2d}+\alpha_k\quad\text{if $E_k-E_0\ge\delta$}.
        \end{cases}
    \end{equation}
    The $\alpha_k$ satisfy
    \begin{equation}
    \label{agsp_def_part_3}
        \sum_{k=0}^{N-1}|\gamma_k|^2\alpha_k\le2D\left(\epsilon+\|\textbf{S}\,'-\textbf{S}\,\|\right).
    \end{equation}
    The norm of $|\psi\rangle$ is can be lower bounded with or without explicit dependence on $c'$, the coordinates of $|\psi\rangle$ in the column space of $\textbf{V}\,'$:
    \begin{equation}
    \label{psi_norm_lower_bound}
    \begin{split}
        &\||\psi\rangle\|^2\ge\|c'\|^2\left(|\gamma_0|^2-\epsilon-\|\textbf{S}\,'-\textbf{S}\,\|\right),\\
        &\||\psi\rangle\|^2\ge|\gamma_0|^2-2\epsilon-2\|\textbf{S}\,'-\textbf{S}\,\|.
    \end{split}
    \end{equation}
\end{theorem}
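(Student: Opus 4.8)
The plan is to reduce the claim to a fact about the \emph{ideal} Krylov space plus a classical polynomial filter. By \eqref{Vprime_def} the matrix $\textbf{V}'$ is a positive scalar multiple of $\textbf{V}$, so $\mathrm{span}(\textbf{V}')=\mathrm{span}(\textbf{V})$, and every vector there has the form $\textbf{V}w=p_w(e^{iH\,dt})|\psi_0\rangle$, where $p_w(z)=\sum_{j=-d}^{d}w_j z^{j}$ is the degree-$\le d$ Laurent polynomial with coefficient vector $w$. Crucially $p_w(e^{iH\,dt})$ is diagonal in the energy eigenbasis, so any such state is automatically $P|\psi_0\rangle$ with $P$ acting as in \eqref{agsp_def_part_1} and $\beta_k=p_w(e^{iE_k\,dt})$; this also makes $\beta_k$ well defined when $\gamma_k=0$. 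Thus the theorem reduces to finding a good coefficient vector that additionally lies in $\mathrm{ran}(\Pi')$, which is exactly \eqref{cprime_req}.

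I would then import the classical real-time filter behind Theorem~3.1 of \cite{epperly2021subspacediagonalization}: after shifting $H$ so $E_0=0$ and choosing $dt=\Theta(1/R)$ (say $dt=\pi/R$, so the phases $E_k\,dt$ fill $[0,\pi]$ and $\cos$ is injective there), the function $q(e^{i\theta})\propto T_d(a\cos\theta+b)$ --- a degree-$\le d$ Laurent polynomial in $e^{i\theta}$, with $T_d$ the Chebyshev polynomial and $a,b$ chosen so that the images of the out-of-window phases land in $[-1,1]$ --- satisfies, after normalizing $q(1)=1$: $|q(e^{iE_k\,dt})|\le 1$ for $E_k-E_0<\delta$, and $|q(e^{iE_k\,dt})|^{2}\le 4(1+\pi\delta/R)^{-2d}$ for $E_k-E_0\ge\delta$ (using the growth of $T_d$ outside $[-1,1]$ and $\frac{1-\cos x}{1+\cos x}=\tan^2(x/2)$). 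With $c$ its coefficient vector and $P_0:=q(e^{iH\,dt})$, we get $\textbf{V}c=P_0|\psi_0\rangle$, $\beta_k^{(0)}:=q(e^{iE_k\,dt})$ obeying the above, $\|P_0|\psi_0\rangle\|^{2}=\sum_k|\gamma_k|^{2}|\beta_k^{(0)}|^{2}\ge|\gamma_0|^{2}$, and $\|c\|^{2}\le D$ (Parseval with $|q|\le 1$ on the unit circle, or coarsely $|c_j|\le 1$).

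The new ingredient is to project: set $c':=\Pi'c$, so \eqref{cprime_req} holds, and $|\psi\rangle:=\textbf{V}'c'=s\,\textbf{V}c'$, where by \eqref{Vprime_def} $s^{2}=c'^{\dagger}\textbf{S}'c'/c'^{\dagger}\textbf{S}c'$ and $c'\ne 0$ (else $\|\textbf{V}c\|^{2}=\|\textbf{V}(I-\Pi')c\|^{2}=O(\epsilon)$, contradicting $\|\textbf{V}c\|^{2}\ge|\gamma_0|^{2}$ under the standing assumption that $|\gamma_0|^{2}$ dominates the noise/threshold scale). Writing $c=c'+(I-\Pi')c$ gives $|\psi\rangle=s\,[\,q-p_{(I-\Pi')c}\,](e^{iH\,dt})|\psi_0\rangle$, so $P=s[\,q-p_{(I-\Pi')c}\,](e^{iH\,dt})$ and $\beta'_k=s(\beta_k^{(0)}-p_{(I-\Pi')c}(e^{iE_k\,dt}))$. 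The estimate that keeps everything linear (not square-root) in the noise is that the leakage is second order: since $\Pi'$ is the spectral projector of $\textbf{S}'$ onto eigenvalues above $\epsilon$ and $\|\textbf{S}'-\textbf{S}\|\le\epsilon$ makes $\textbf{S}'+\epsilon\mathds{1}$ positive semidefinite, the block $(I-\Pi')\textbf{S}'(I-\Pi')$ has all eigenvalues in $[-\epsilon,\epsilon]$, whence
\[
\|\textbf{V}(I-\Pi')c\|^{2}=\bigl((I-\Pi')c\bigr)^{\dagger}\textbf{S}\,(I-\Pi')c\le\bigl(\epsilon+\|\textbf{S}'-\textbf{S}\|\bigr)\|(I-\Pi')c\|^{2}\le D\bigl(\epsilon+\|\textbf{S}'-\textbf{S}\|\bigr),
\]
and, $p_{(I-\Pi')c}(e^{iH\,dt})$ being diagonal, the left side equals $\sum_k|\gamma_k|^{2}|p_{(I-\Pi')c}(e^{iE_k\,dt})|^{2}$. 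Checking $s=1+O(\epsilon/|\gamma_0|^{2})$ (compare $c'^{\dagger}\textbf{S}''c'$ with $\|\textbf{V}c'\|^{2}=\Theta(1)$), the inequality $(a+b)^{2}\le 2a^{2}+2b^{2}$ gives $|\beta'_k|^{2}\le 2|\beta_k^{(0)}|^{2}+\alpha_k$ with $\alpha_k$ collecting $2s^{2}|p_{(I-\Pi')c}(e^{iE_k\,dt})|^{2}$ plus the small $s$-slack; this is \eqref{agsp_def_part_2}, and summing over $k$ gives $\sum_k|\gamma_k|^{2}\alpha_k\le 2D(\epsilon+\|\textbf{S}'-\textbf{S}\|)$, which is \eqref{agsp_def_part_3}.

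For the norm bounds \eqref{psi_norm_lower_bound}, $\||\psi\rangle\|^{2}=s^{2}\|\textbf{V}c'\|^{2}=c'^{\dagger}\textbf{S}'c'=c'^{\dagger}\textbf{S}''c'$ (last step since $c'=\Pi'c'$). Expanding $\|\textbf{V}c'\|^{2}=\|\textbf{V}c-\textbf{V}(I-\Pi')c\|^{2}$, the cross term $\mathrm{Re}\,c^{\dagger}\textbf{S}(I-\Pi')c$ is again only $O(\epsilon)$ --- not $O(\sqrt{\epsilon})$ --- because $\textbf{S}(I-\Pi')\approx\textbf{S}'(I-\Pi')=(I-\Pi')\textbf{S}'(I-\Pi')$ has norm $\le\epsilon$, so $\|\textbf{V}c'\|^{2}\ge\|\textbf{V}c\|^{2}-O(\epsilon)\ge|\gamma_0|^{2}-O(\epsilon)$; comparing $\textbf{S}''$ with $\textbf{S}$ (operator-norm gap $\le\epsilon+\|\textbf{S}'-\textbf{S}\|$) and then either keeping or discarding the $\|c'\|^{2}$ factor yields the two displayed forms. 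I expect the main obstacle to be not conceptual but the constant-chasing: importing the Chebyshev filter in precisely the $(1+\pi\delta/R)^{-2d}$ form (the shift $E_0=0$, the choice $dt=\pi/R$, the growth rate of $T_d$), controlling the rescaling scalar $s$ tightly enough that the constants $2$ and $8$ in \eqref{agsp_def_part_2} survive, and tracking $\|c\|$, $\|(I-\Pi')c\|$ and $\|c'\|$ so that the $2D$ in \eqref{agsp_def_part_3} and the noise combinations in \eqref{psi_norm_lower_bound} come out exactly as stated; the one genuinely load-bearing idea is the second-order-leakage estimate above.
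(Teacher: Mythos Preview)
Your proposal is conceptually on target and very close to the paper's argument: same Chebyshev/trigonometric filter $p^*$ from \cite{epperly2021subspacediagonalization}, same projection of the ideal coefficient vector onto $\mathrm{ran}(\Pi')$, same ``second-order leakage'' idea. There are two genuine technical differences worth flagging.

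\textbf{The rescaling of $c'$ and the $s$-slack.} You set $c'=\Pi'c$ and then carry the scalar $s=\sqrt{c'^\dagger\textbf{S}'c'/c'^\dagger\textbf{S}c'}$ through the analysis. The paper instead defines $c'=\widetilde{c}\sqrt{\widetilde{c}^\dagger\textbf{S}\widetilde{c}/\widetilde{c}^\dagger\textbf{S}'\widetilde{c}}$ with $\widetilde{c}=\Pi'c$, so that by \eqref{Vprime_def} one has $\textbf{V}'c'=\textbf{V}\widetilde{c}$ \emph{exactly} and $s$ never appears. This is not cosmetic: your $s$-slack term $2(s^2-1)|\beta_k^{(0)}|^2$ that you propose to fold into $\alpha_k$ contributes $2(s^2-1)\sum_k|\gamma_k\beta_k^{(0)}|^2=2(s^2-1)\|\textbf{V}c\|^2$ to $\sum_k|\gamma_k|^2\alpha_k$, and since $s^2-1\le\|\textbf{S}'-\textbf{S}\|/\|\textbf{V}\widetilde{c}\|^2$ this is of order $\|\textbf{S}'-\textbf{S}\|/|\gamma_0'|^2$, which is \emph{not} dominated by $2D(\epsilon+\|\textbf{S}'-\textbf{S}\|)$ when $|\gamma_0'|^2\ll 1/D$. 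So the exact constant in \eqref{agsp_def_part_3} does not survive your route as written; the fix is precisely the paper's rescaling (divide your $c'$ by $s$), after which $\beta_k'=\beta_k^{(0)}+\Phi_k\Pi'^\perp c$ with no $s$ at all.

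\textbf{The leakage bound.} Here your approach is actually cleaner than the paper's. You bound $\sum_k|\gamma_k|^2\cdot 2|\Phi_k\Pi'^\perp c|^2=2(\Pi'^\perp c)^\dagger\textbf{S}(\Pi'^\perp c)$ directly via $\Pi'^\perp\textbf{S}'\Pi'^\perp\preceq\epsilon\,\Pi'^\perp$ and $\|\textbf{S}'-\textbf{S}\|$, together with $\|c\|^2\le 1$ (Parseval, which you mention); this gives $2(\epsilon+\|\textbf{S}'-\textbf{S}\|)$ with no Davis--Kahan and no factor of $D$. The paper instead first applies Cauchy--Schwarz to enlarge $\alpha_k$ to $2\sum_j|[\Phi\Pi'^\perp]_{kj}|^2$ (independent of $c$), sums to $2\,\mathrm{Tr}(\Pi'^\perp\textbf{S}\Pi'^\perp)$, and then bounds that trace eigenvalue-by-eigenvalue using the Davis--Kahan $\sin\Theta$ theorem, picking up the factor $D$. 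Your direct estimate is simpler and tighter; the paper's detour buys an $\alpha_k$ that does not depend on $c$, but that is not actually needed downstream.
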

~\\
\noindent
We give a proof in \cref{proofs_app} because there is a significant difference from the proof of Theorem 3.1 in~\cite{epperly2021subspacediagonalization}: in~\cite{epperly2021subspacediagonalization}, the same ansatz coordinates are used in the faulty Krylov space as in the ideal Krylov space, but in this work we modify the ansatz coordinates (one can already see that this will be necessary given \eqref{cprime_req}), which leads to a more complex dependence involving the difference $\|\textbf{S}'-\textbf{S}\|$.
We point out the details in the proof.
However, it is important to acknowledge that the other main ideas in \cref{epperly_thm}, specifically the choice of ideal ansatz, are based on~\cite{epperly2021subspacediagonalization}.

As for interpretation of \cref{epperly_thm}, $|\psi\rangle$ is an approximate ground state of $H$ because it comes from application of the approximate ground state projector $P$ of $H$ to the initial state $|\psi_0\rangle$.
To see that $P$ is an approximate ground state projector of $H$, note that \eqref{agsp_def_part_1} and \eqref{agsp_def_part_2} show that $P$ suppresses amplitudes of energy eigenstates of $H$ with energies above $E_0+\delta$ by the exponentially-vanishing factor $8\left(1+\frac{\pi\delta}{R}\right)^{-2d}$ plus the additional term $\alpha$ due to noise in the effective Krylov space.
Also key is \eqref{psi_norm_lower_bound}, which guarantees that $P$ is not just suppressing the entire state, since the total norm of $|\psi\rangle$ is lower bounded.

Next, we show that the error in the ground state energy estimate obtained by taking the expectation value of some other Hamiltonian $H'$ with respect to the approximate ground state $|\psi\rangle$ constructed in \cref{epperly_thm} is upper bounded, with the bound depending $\|H'-H\|$:
\begin{widetext}
\begin{theorem}
\label{approx_projector_perturbed_H_thm}
    Let $H$ and $H'$ be Hamiltonians.
    Let $E_0$ be the ground state energy of $H$, which we want to estimate.
    Let
    \begin{equation}
    \label{psi_def}
        |\psi\rangle=P|\psi_0\rangle
    \end{equation}
    be the approximately projected state defined in the statement of \cref{epperly_thm}.
    Let
    \begin{equation}
    \label{deltatilde_def}
        \widetilde{\Delta}\coloneqq E_1'-E_0
    \end{equation}
    be the gap between the ground state energy of $H$ and the first excited energy of $H'$, and let $\mathbf{1}$ denote the indicator function, i.e.,
    \begin{equation}
        \mathbf{1}(\delta'>\widetilde{\Delta})=
        \begin{cases}
            1\quad\text{if $\delta'>\widetilde{\Delta}$},\\
            0\quad\text{if $\delta'\le\widetilde{\Delta}$}.
        \end{cases}
    \end{equation}
    Then the error (as an estimate of $E_0$) of the expectation value of $H'$ with respect to $|\psi\rangle$ is upper bounded as
    \begin{equation}
    \label{approx_proj_bound_3}
        \frac{\langle\psi|(H'-E_0)|\psi\rangle}{\langle\psi|\psi\rangle}\le\delta'\mathbf{1}(\delta'>\widetilde{\Delta})+\|H'-H\|+6\|H\|\left(\frac{\|H'-H\|}{\delta'-\delta}+\frac{\zeta}{\||\psi\rangle\|^2}+\frac{8}{\||\psi\rangle\|^2}\left(1+\frac{\pi\delta}{R}\right)^{-2d}\right),
    \end{equation}
    where 
    \begin{equation}
    \label{zeta_def}
        \zeta\coloneqq2D\left(\epsilon+\|\textbf{S}\,'-\textbf{S}\,\|\right)
    \end{equation}
    and the bound holds for any parameters $0<\delta<\delta'<\|H\|$, provided
    \begin{equation}
    \label{thm3_assumption}
        \|H'-H\|<\delta'-\delta.
    \end{equation}
\end{theorem}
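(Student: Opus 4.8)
The plan is to work in the eigenbasis $(E_j',|E_j'\rangle)_j$ of $H'$ (weakly increasing) and expand $\langle\psi|(H'-E_0)|\psi\rangle=\sum_j|\langle E_j'|\psi\rangle|^2(E_j'-E_0)$, splitting the sum at the cut $E_0+\delta'$. In the low block ($E_j'<E_0+\delta'$): the $j=0$ term is at most $|\langle E_0'|\psi\rangle|^2\,|E_0'-E_0|\le\langle\psi|\psi\rangle\,\|H'-H\|$ by Weyl's theorem (\cref{weyl_thm}); the terms with $E_j'<E_0$ are nonpositive and I discard them; the remaining low terms have $0\le E_j'-E_0<\delta'$ and can occur only if $E_1'<E_0+\delta'$, i.e.\ $\widetilde\Delta<\delta'$, so they sum to at most $\delta'\,\mathbf{1}(\delta'>\widetilde\Delta)\,\langle\psi|\psi\rangle$. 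In the high block ($E_j'\ge E_0+\delta'$) I bound $E_j'-E_0\le E_\mathrm{max}'-E_0\le 2\|H\|+\|H'-H\|<3\|H\|$, using $\|H'-H\|<\delta'-\delta<\delta'<\|H\|$, so the high block is at most $3\|H\|\,\|\Pi_{\ge}'|\psi\rangle\|^2$, where $\Pi_{\ge}'$ is the spectral projector of $H'$ onto $[E_0+\delta',\infty)$.

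The heart of the argument is bounding $\|\Pi_{\ge}'|\psi\rangle\|^2$. I would split $|\psi\rangle=|\psi_<\rangle+|\psi_\ge\rangle$ according to the eigenspaces of $H$ below and at/above $E_0+\delta$. Since $|\psi\rangle=P|\psi_0\rangle=\sum_k\gamma_k\beta_k'|E_k\rangle$, \cref{epperly_thm}---the second case of \eqref{agsp_def_part_2} together with \eqref{agsp_def_part_3} and $\sum_k|\gamma_k|^2=1$---gives $\|\psi_\ge\|^2\le 8(1+\pi\delta/R)^{-2d}+\zeta$. For the complementary piece, $\Pi_{\ge}'|\psi_<\rangle=\Pi_{\ge}'\Pi_{<}|\psi_<\rangle$ with $\Pi_{<}$ the spectral projector of $H$ onto $(-\infty,E_0+\delta)$; the two windows $(-\infty,E_0+\delta)$ (for $H$) and $[E_0+\delta',\infty)$ (for $H'$) are separated by the gap $\delta'-\delta>0$, so the Davis--Kahan $\sin\Theta$ theorem gives $\|\Pi_{\ge}'\Pi_{<}\|\le\|H'-H\|/(\delta'-\delta)$. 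Using the triangle inequality, $(x+y)^2\le 2x^2+2y^2$, and finally $\|H'-H\|/(\delta'-\delta)<1$ from \eqref{thm3_assumption} to downgrade the square of that ratio to the ratio itself, one gets $\|\Pi_{\ge}'|\psi\rangle\|^2\le 2\tfrac{\|H'-H\|}{\delta'-\delta}\langle\psi|\psi\rangle+2\big(8(1+\pi\delta/R)^{-2d}+\zeta\big)$.

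Combining the two blocks and dividing by $\langle\psi|\psi\rangle$ then yields \eqref{approx_proj_bound_3} exactly: the low block produces $\delta'\mathbf{1}(\delta'>\widetilde\Delta)+\|H'-H\|$, and $3\|H\|$ times the last bound produces the term $6\|H\|\big(\tfrac{\|H'-H\|}{\delta'-\delta}+\tfrac{\zeta}{\||\psi\rangle\|^2}+\tfrac{8}{\||\psi\rangle\|^2}(1+\pi\delta/R)^{-2d}\big)$, the constant $6=3\cdot 2$ tracking the $3\|H\|$ high-block prefactor and the $2$ from $(x+y)^2\le 2x^2+2y^2$.

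I expect the one genuinely nontrivial step to be the leakage estimate $\|\Pi_{\ge}'\Pi_{<}\|\le\|H'-H\|/(\delta'-\delta)$. A naive variational attempt---bounding $\langle\chi|(H'-E_0-\delta)|\chi\rangle$ from above by closeness to $H$ and from below by decomposing $|\chi\rangle$ across $\Pi_{\ge}'$, for unit $|\chi\rangle$ in the range of $\Pi_{<}$---picks up an additive $\delta$ in the numerator and only delivers the useless bound $(\delta+2\|H'-H\|)/(\delta'-\delta)$; getting the bound that is linear in $\|H'-H\|$ genuinely needs the $\sin\Theta$ machinery (equivalently, observing that $Y=\Pi_{\ge}'\Pi_{<}$ solves the Sylvester equation $H'Y-YH=\Pi_{\ge}'(H'-H)\Pi_{<}$, whose associated Sylvester operator on the relevant subspace has inverse of norm $\le(\delta'-\delta)^{-1}$). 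Everything else is bookkeeping---the indicator, eigenvalues of $H'$ that happen to fall below $E_0$, and invoking \cref{epperly_thm}---and, as the theorem is stated for a generic $H'$, no step uses the particular effective Hamiltonian of \eqref{Hprime_forms}.
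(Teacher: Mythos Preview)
Your proposal is correct and follows essentially the same route as the paper's proof: spectral split of $H'$ at $E_0+\delta'$, Weyl's theorem for the $j=0$ term and for the high-block prefactor $E'_{\max}-E_0<3\|H\|$, partition of $|\psi\rangle$ via the spectral projector of $H$ at $E_0+\delta$, \cref{epperly_thm} for $\||\psi_\ge\rangle\|^2$, Davis--Kahan for the leakage $\|\Pi'_\ge\Pi_<\|$, and the inequality $(x+y)^2\le 2x^2+2y^2$ to combine. The only cosmetic differences are in how the low-block cases are enumerated and where the boundary points sit in the spectral intervals.
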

\noindent
The proof can be found in \cref{proofs_app}.

Our main result below is almost a corollary of \cref{approx_projector_perturbed_H_thm}.
It is obtained by inserting the particular effective Hamiltonian $H'$ constructed in \cref{effective_for_upper} into \cref{approx_projector_perturbed_H_thm}, and expressing the bound entirely in terms of problem parameters and noise rates.
\begin{theorem}
\label{complete_bound_thm}
    Let $H$ be a Hamiltonian, let $(\textbf{H},\textbf{S}\,)=(\textbf{V}\,^\dagger H\textbf{V}, \textbf{V}\,^\dagger\textbf{V}\,)$ be a real-time Krylov matrix pair representing $H$ in the Krylov space $\text{span}(\textbf{V}\,)$, and let $(\textbf{H}\,',\textbf{S}\,')$ be a Hermitian approximation to $(\textbf{H},\textbf{S}\,)$.
    Let $E_0$ be the ground state energy of $H$, which we want to estimate.
    Let $\epsilon>0$ be a regularization threshold, and let
    \begin{equation}
    \label{chi_def}
        \chi\coloneqq\|\textbf{H}\,'-\textbf{H}\,\|+\|\textbf{S}\,'-\textbf{S}\,\|\|H\|
    \end{equation}
    be a measure of the noise.
    Let
    \begin{equation}
    \label{perturbed_overlap}
        |\gamma_0'|^2\coloneqq|\gamma_0|^2-2\epsilon-2\|\textbf{S}\,'-\textbf{S}\,\|
    \end{equation}
    be a noisy effective version of the initial state's overlap $|\gamma_0|^2$ with the true ground state.
    Let
    \begin{equation}
    \label{perturbed_gap}
        \Delta'\coloneqq\Delta-\frac{\chi}{|\gamma_0'|^2}
    \end{equation}
    be a noisy effective version of the spectral gap $\Delta$ of $H$.
    Then the lowest eigenvalue $\widetilde{E}_0$ of the thresholded matrix pair obtained from $(\textbf{H}\,',\textbf{S}\,')$ is bounded as
    \begin{equation}
    \label{approx_proj_bound}
        \widetilde{E}_0-E_0\le\delta'\mathbf{1}(\delta'>\Delta')+\frac{\chi}{|\gamma_0'|^2}+\frac{6\|H\|}{|\gamma_0'|^2}\left(\frac{\chi}{\delta'-\delta}+\zeta+8\left(1+\frac{\pi\delta}{2\|H\|}\right)^{-2d}\right)
    \end{equation}
    where $\zeta$ is defined in \eqref{zeta_def} and the bound holds for any parameters $\delta'>\delta>0$, provided the following assumptions hold:\\
    (i)
    \begin{equation}
    \label{thm4_assumption}
        \frac{\chi}{|\gamma_0'|^2}<\delta'-\delta,
    \end{equation}
    (ii)
    \begin{equation}
    \label{thm4_assumption_2}
        \epsilon\ge\|\textbf{S}\,'-\textbf{S}\,\|,
    \end{equation}
    and (iii) the right-hand side of \eqref{perturbed_overlap} is positive.
\end{theorem}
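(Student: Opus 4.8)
The plan is to read off \cref{complete_bound_thm} as an instantiation of \cref{approx_projector_perturbed_H_thm}, applied to the specific effective Hamiltonian $H'$ and effective Krylov space $\text{span}(\textbf{V}')$ built in \cref{effective_for_upper}, and then to replace every quantity on the right-hand side of \eqref{approx_proj_bound_3} by an explicit function of the problem parameters and noise rates. First I would fix $c'$ to be the coordinate vector (in $\textbf{V}'$) of the approximately projected state $|\psi\rangle=P|\psi_0\rangle$ supplied by \cref{epperly_thm}; this $c'$ satisfies the standing requirements $c'=\Pi'c'$ and $c'\neq0$ from \eqref{cprime_req}, so the constructions \eqref{Vprime_def} of $\textbf{V}'$ and \eqref{Hprime_forms} of $H'$ apply. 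By the Rayleigh--Ritz principle applied inside the thresholded pair $(\widetilde{\textbf{H}}',\widetilde{\textbf{S}}')$ --- whose well-defined energies coincide with those of $(\textbf{H}'',\textbf{S}'')$, as established in \cref{noise_as_perturbation} --- together with the defining relations $c'^\dagger\textbf{V}'^\dagger H'\textbf{V}'c'=c'^\dagger\textbf{H}''c'$ and $c'^\dagger\textbf{V}'^\dagger\textbf{V}'c'=c'^\dagger\textbf{S}''c'$ from \eqref{Hprime_req} and \eqref{vprime_req}, one gets $\widetilde{E}_0\le\langle\psi|H'|\psi\rangle/\langle\psi|\psi\rangle$. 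Applying \cref{approx_projector_perturbed_H_thm} to the pair $(H,H')$ and this $|\psi\rangle$ then bounds $\widetilde{E}_0-E_0$ by the right-hand side of \eqref{approx_proj_bound_3}, provided its hypothesis \eqref{thm3_assumption}, $\|H'-H\|<\delta'-\delta$, holds (verified below), and provided $\|\textbf{S}'-\textbf{S}\|\le\epsilon$, which is exactly assumption~(ii), \eqref{thm4_assumption_2}, and is also the hypothesis of \cref{epperly_thm}.

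Next I would discharge each term of \eqref{approx_proj_bound_3}. For $\|H'-H\|$: combine \eqref{Hprime_diff}, which already gives $\|H'-H\|\le\|c'\|^2\chi/\langle\psi|\psi\rangle$ with $\chi$ as in \eqref{chi_def}, with the first lower bound in \eqref{psi_norm_lower_bound}, $\langle\psi|\psi\rangle\ge\|c'\|^2(|\gamma_0|^2-\epsilon-\|\textbf{S}'-\textbf{S}\|)$; assumption~(ii) yields $|\gamma_0|^2-\epsilon-\|\textbf{S}'-\textbf{S}\|\ge|\gamma_0|^2-2\epsilon-2\|\textbf{S}'-\textbf{S}\|=|\gamma_0'|^2$ (positive by assumption~(iii)), hence $\|c'\|^2/\langle\psi|\psi\rangle\le1/|\gamma_0'|^2$ and $\|H'-H\|\le\chi/|\gamma_0'|^2$. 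This turns \eqref{thm4_assumption} into the required \eqref{thm3_assumption}, gives the standalone $\|H'-H\|$ term its stated value $\chi/|\gamma_0'|^2$, and turns $\|H'-H\|/(\delta'-\delta)$ into $(\chi/|\gamma_0'|^2)/(\delta'-\delta)$. For the $1/\||\psi\rangle\|^2$ factors I use the second, $c'$-free bound in \eqref{psi_norm_lower_bound}, $\||\psi\rangle\|^2\ge|\gamma_0|^2-2\epsilon-2\|\textbf{S}'-\textbf{S}\|=|\gamma_0'|^2$, so $1/\||\psi\rangle\|^2\le1/|\gamma_0'|^2$; this multiplies both $\zeta$ (already equal to \eqref{zeta_def}) and the exponential term. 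For the exponential, $R=E_\text{max}-E_0\le2\|H\|$ gives $(1+\pi\delta/R)^{-2d}\le(1+\pi\delta/(2\|H\|))^{-2d}$. Collecting terms and pulling the common factor $6\|H\|/|\gamma_0'|^2$ to the front reproduces \eqref{approx_proj_bound} except for the indicator.

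Finally, for the indicator I would compare $\widetilde{\Delta}=E_1'-E_0$ with $\Delta'=\Delta-\chi/|\gamma_0'|^2$ from \eqref{perturbed_gap}: Weyl's theorem gives $E_1'\ge E_1-\|H'-H\|\ge E_1-\chi/|\gamma_0'|^2$, so $\widetilde{\Delta}\ge\Delta-\chi/|\gamma_0'|^2=\Delta'$, whence $\mathbf{1}(\delta'>\widetilde{\Delta})\le\mathbf{1}(\delta'>\Delta')$ and the term $\delta'\mathbf{1}(\delta'>\widetilde{\Delta})$ is dominated by $\delta'\mathbf{1}(\delta'>\Delta')$. (The auxiliary hypothesis $\delta'<\|H\|$ of \cref{approx_projector_perturbed_H_thm} is harmless: for $\delta'\ge\|H\|$ one either treats the statement as vacuous in the regime flagged in \cref{main_results}, or closes it via the crude bound $\widetilde{E}_0-E_0\le2\|H\|+\|H'-H\|$.) The work here is essentially bookkeeping; the one step that needs care is the chain $\|c'\|^2/\langle\psi|\psi\rangle\le1/|\gamma_0'|^2$, since it must simultaneously power the hypothesis check, the standalone noise term, and the $\delta'-\delta$ denominator, and it is exactly where assumptions~(ii) and~(iii) enter --- so that is where I would be most careful about signs and about which of the two bounds in \eqref{psi_norm_lower_bound} is the right one to invoke in each place.
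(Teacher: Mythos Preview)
Your proposal is correct and follows essentially the same route as the paper's own proof: apply \cref{approx_projector_perturbed_H_thm} to the effective pair $(H,H')$ from \cref{effective_for_upper}, use \eqref{Hprime_diff} together with the first line of \eqref{psi_norm_lower_bound} to bound $\|H'-H\|\le\chi/|\gamma_0'|^2$, use the second line of \eqref{psi_norm_lower_bound} for the $1/\||\psi\rangle\|^2$ factors, Weyl's theorem for $\widetilde{\Delta}\ge\Delta'$, $R\le2\|H\|$ for the exponential, and Rayleigh--Ritz for $\widetilde{E}_0\le\langle\psi|H'|\psi\rangle/\langle\psi|\psi\rangle$. One small correction: the inequality $|\gamma_0|^2-\epsilon-\|\textbf{S}'-\textbf{S}\|\ge|\gamma_0'|^2$ does not actually require assumption~(ii) (it holds trivially since $\epsilon,\|\textbf{S}'-\textbf{S}\|\ge0$); assumption~(ii) enters only as the hypothesis \eqref{thm1_assumption} of \cref{epperly_thm}, which you already noted separately.
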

\end{widetext}

\noindent
The proof is given in \cref{proofs_app}.
Note that the assumption \eqref{thm4_assumption} is extremely weak since, if it is violated, the error is of order $\Omega(\|H\|)$ due to the first term inside the square in \eqref{approx_proj_bound}.
The assumption \eqref{thm4_assumption_2} may be interpreted as formalizing the intuitive notion that the threshold should be larger than the noise rate, guaranteeing that we truncate out vectors that are compatible with zero under the noise.
Finally, the fact that we require the right-hand side of \eqref{perturbed_overlap} to be positive should not be surprising: if the initial state's overlap $|\gamma_0|^2$ with the true ground state were smaller than $\epsilon$ or $\|\textbf{S}'-\textbf{S}\|$, then it would be dominated by the error induced by the thresholding procedure or the noise, respectively.

The terms in \eqref{approx_proj_bound} possess intuitive origins:
\begin{enumerate}

    \item The first term $\delta'\mathbf{1}(\delta'>\Delta')$ is the size of the low energy subspace (above the ground state energy) that we project into, allowing for an extra tolerance (the difference between $\delta'$ and $\delta$) to account for the difference between the low energy eigenspaces of $H'$ and $H$.
    The indicator function $\mathbf{1}(\delta'>\Delta')$ captures the fact that if $\delta'\le\Delta'$, the effective gap, then this low energy subspace contains only the ground space, whose contribution to the energy error is captured by the second term.
    The theorem holds for any ${\delta'>\delta>0}$, with $\delta$ determining the rate of convergence (due to the last term inside the square).
    In words, the larger the low energy subspace, the faster we converge to it.

    \item The second term is an effect of the noise: it is due to the difference between the ground state energies of the exact Hamiltonian $H$ and the effective Hamiltonian corresponding to the noisy problem.
    
    \item The first two terms inside the large parentheses are also effects of the noise. Our ansatz state $|\psi\rangle$ in the effective Krylov space is an approximate ground state of the true Hamiltonian $H$, in the sense that it has high amplitude in low-energy eigenspaces (below $E_0+\delta$) of $H$, and low amplitude in high-energy eigenspaces. It is applied to the effective Hamiltonian $H'$. There are two distinct impacts on the resulting energy:
    \begin{enumerate}
    
        \item The fact that $|\psi\rangle$ is applied to $H'$ instead of $H$ means that its high amplitude in the low energy eigenspaces of $H$ is weakly mixed into the high energy eigenspaces (above $E_0+\delta'$) of $H'$.
        This leads to the first term inside the large parentheses: recall that $\chi$ as defined in \eqref{chi_def} determines the spectral norm distance between $H$ and $H'$ (see \eqref{Hprime_diff}).
        The gap $\delta'-\delta$ between ``low-energy'' and ``high-energy'' sets the rate of suppression.
        
        \item The second term ($\zeta$) inside the large parentheses comes from the disturbance of the (low) amplitudes of $|\psi\rangle$ in high-energy eigenspaces of $H'$, due to the error in the Krylov space and to the thresholding procedure.
        
    \end{enumerate}
    
    \item The final term is due to the ideal error of the quantum Krylov algorithm, from approximate projection of the initial state into the low-energy subspace.
    
\end{enumerate}

\cref{complete_bound_thm} still leaves us with a choice of the parameters $\delta$ and $\delta'$, which only pertain to the analysis, i.e., the result holds for any choices of their values and they are not required to actually execute the algorithm.
Hence if the problem parameters are known, one way to obtain an upper bound is to minimize over $\delta$ and $\delta'$ subjected to the constraints in the theorem statement.

Short of that, a reasonable choice would be
\begin{equation}
\label{project_to_gs_choice}
    \delta=\frac{\Delta'}{2},\quad\delta'=\Delta',
\end{equation}
which is the choice we would make if we want to obtain not just an approximate ground state energy but an approximate ground state.
In this case, the first term in \eqref{approx_proj_bound} vanishes, and substituting \eqref{project_to_gs_choice} into the remainder yields
\begin{equation}
\label{approx_proj_bound_gs}
\begin{split}
    &\widetilde{E}_0-E_0\\
    &~\le\frac{\chi}{|\gamma_0'|^2}+\frac{6\|H\|}{|\gamma_0'|^2}\left(\frac{2\chi}{\Delta'}+\zeta+8\left(1+\frac{\pi\Delta'}{4\|H\|}\right)^{-2d}\right).
\end{split}
\end{equation}

Eq.~\eqref{thm4_assumption_2} lower bounds the threshold; otherwise since $\zeta$ as defined in \eqref{zeta_def} is linear in $\epsilon$, we would choose $\epsilon\rightarrow0$.
Given \eqref{thm4_assumption_2}, the best we can do for the upper bound is choose $\epsilon=\|\textbf{S}'-\textbf{S}\|$, i.e. equality in \eqref{thm4_assumption_2}, and this represents a typical choice in practice as well, at least in scaling~\cite{klymko2022realtime,kirby2023exactefficient,epperly2021subspacediagonalization}.
With this choice, the bound \eqref{approx_proj_bound_gs} scales linearly with the noise rates $\|\textbf{H}'-\textbf{H}\|$ and $\|\textbf{S}'-\textbf{S}\|$.

To make this explicit, let us define a single unitless noise rate
\begin{equation}
    \eta\coloneqq\max\left(\|\textbf{S}'-\textbf{S}\|,\frac{\|\textbf{H}'-\textbf{H}\|}{\|H\|}\right),
\end{equation}
as in \cref{main_results}.
In terms of this, $\chi$ is bounded as $\chi\le O(\|H\|\eta)$ and $\zeta$ is bounded as ${\zeta\le O\left(D(\epsilon+\eta)\right)=O\left(D\eta\right)}$.
Inserting these into \eqref{approx_proj_bound_gs} and assuming $|\gamma_0'|^2=\Omega(|\gamma_0|^2)$ and $\Delta'=\Omega(\Delta)$ yields the asymptotic expression \eqref{asymptotic_bound} given in \cref{main_results}.

\section{Numerical example}
\label{numerics}

\begin{figure}
    \centering
    \includegraphics[width=\columnwidth]{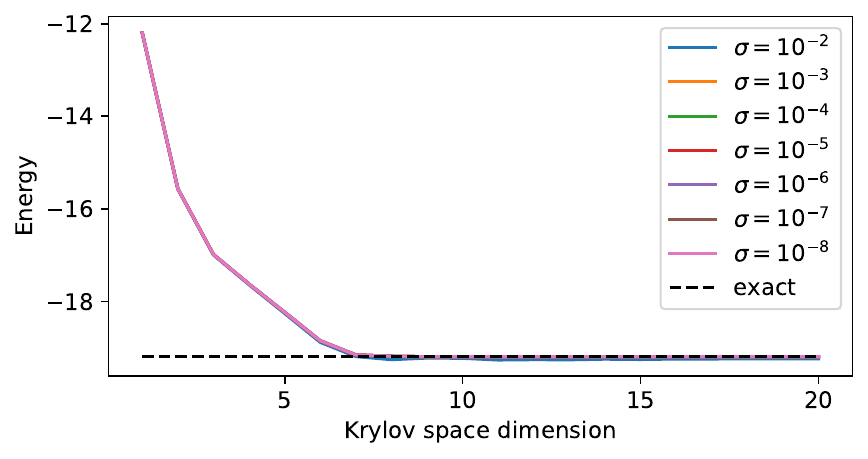}
    \includegraphics[width=\columnwidth]{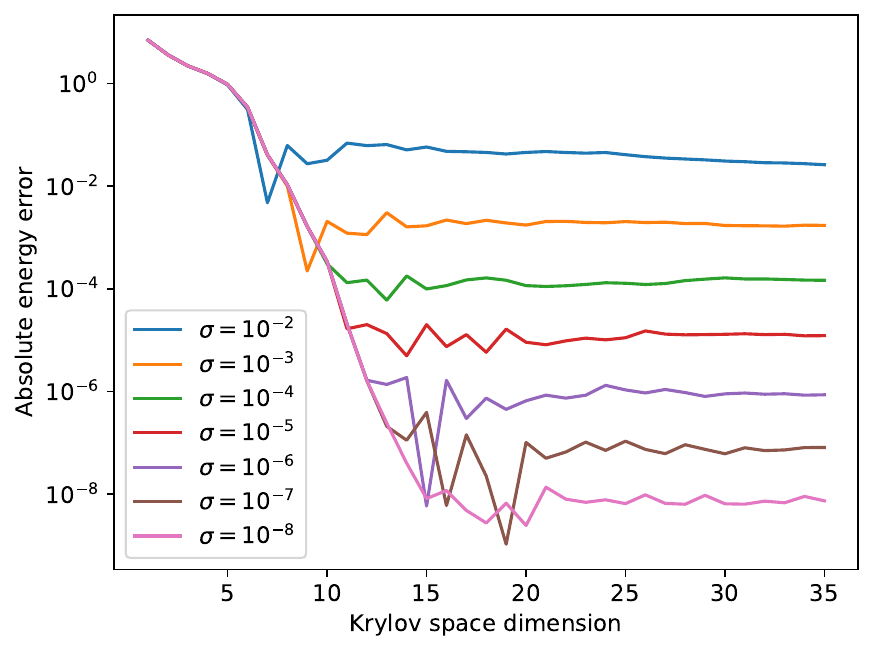}
    \includegraphics[width=\columnwidth]{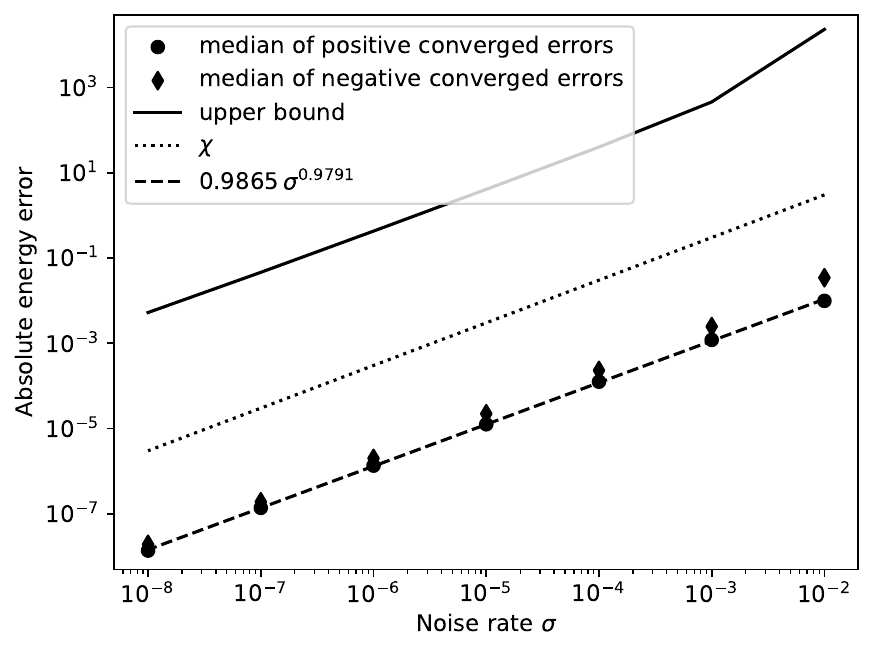}
    \caption{The top panel shows $\widetilde{E}_0$ versus $d$ for the Heisenberg model described in \cref{numerics}, classically simulated for several noise rates $\sigma$. Each point is a median of 10000 runs. The middle panel shows the corresponding absolute energy errors plotted on a log scale. The bottom panel shows converged energy errors, given by medians over all errors from dimensions 26 to 35, plotted against the noise rate. We separately evaluate these for the signed errors that are positive and negative. The dashed line is the best monomial fit to the positive error data. The solid curve shows the bound \eqref{approx_proj_bound}. Finally, the dotted curve shows the values of $\chi$ \eqref{chi_def} at each noise rate.}
    \label{example_fig}
\end{figure}

Although the main focus of this work is on the analytic bounds and their proofs, a numerical example is illustrative of both the application and the limitations of the results.
Source code for the following is available.\footnote{\url{https://github.com/wmkirby1/krylov-analysis-paper-numerics}}
We take as our example a Heisenberg model with spin anisotropy $j=1$ and a weak field strength of $h=0.2$:
\begin{equation}
    H=h\sum_mZ_m+\sum_{\langle m,n\rangle}\big(X_mX_n+Y_mY_n+jZ_mZ_n\big).
\end{equation}

We classically simulate the quantum Krylov algorithm for this model on a $3\times3$ square lattice.
For an initial state we take the antiferromagnetic state containing 4 spin-up ($|0\rangle$) and 5 spin-down ($|1\rangle$) sites, since the field gives this spin sector a lower energy than the opposite antiferromagnetic state (5 and 4).
This yields an overlap $|\gamma_0|^2\approx0.275$.
The relevant spectral gap is the gap between lowest and next-to-lowest energies in this sector, which is $\Delta\approx3.96$.

To assess the effects of errors in the matrix elements, we add Gaussian noise of various widths $\sigma$ to the matrix elements in $\textbf{S}$, and widths $\|H\|\sigma$ to the matrix elements in $\textbf{H}$.
The regularization threshold $\epsilon$ is chosen to be $0.1\,D\,\sigma$, which is an instantiation of $\epsilon=O(\|\textbf{S}'-\textbf{S}\|)$ that is effective in practice.
The remaining parameters in \eqref{approx_proj_bound} can be calculated from the above quantities.
Finally, the bound \eqref{approx_proj_bound} holds for any choices of $\delta$ and $\delta'$ subject to the constraints given in \cref{complete_bound_thm}, so we can find the tightest bound by minimizing over their values subject to those constraints.

The lower panel in \cref{example_fig} shows the converged errors, represented as the medians of all errors from dimensions 26 to 35 at each noise rate.
We separate these data into the positive signed errors and the negative signed errors, since we have different bounds for these two cases.
The plot also shows the best monomial fit to the positive error data, the bound obtained by optimizing \eqref{approx_proj_bound} over $\delta$ and $\delta'$, and the values of $\chi$ for each noise rate.
The best monomial fit to the data is $O(\sigma^{0.979})=O(\chi^{0.979})$, illustrating that the converged energy errors perform essentially as the expected $O(\chi)$ of the upper bound with the choice $\epsilon=O(\chi)$.
The bound exhibits nearly identical scaling, but is about six orders of magnitude worse than the actual errors.
As it turns out, $\chi$ alone also provides an upper bound in this case, but is only worse than the actual errors by about two orders of magnitude.

Some takeaways from this numerical demonstration are as follows.
First, the positive error data exhibits the same (nearly) linear scaling with noise as the bound, but the bound overshoots the actual errors by a constant on the order of $10^6$.
This is not too surprising because the present example is a specific instance and likely not a worst case, and also because the proof of the bound involves a sequence of intermediate inequalities.
Tracking any or all of these explicitly is possible in principle and would lead to a more complicated but tighter bound.

The other takeaway is that the negative error data exhibit nearly identical performance to the positive error data, even though the threshold is chosen as $\epsilon=O(\chi)$.
This illustrates a point discussed in \cref{main_results}: although the lower bound in \cref{lower_bound_thm} suggests that $\epsilon=O(\chi)$ could lead to negative errors of order $\|H\|$, in practice we typically do not see this.
For this reason, we do not plot the lower bound's magnitude in \cref{example_fig} because it is roughly 148, independent of $\sigma$.

This emphasizes that the lower bound in this work can likely be improved, as discussed in \cref{main_results}.
The values of $\chi$, which are approximately equivalent to the numerator of the lower bound \eqref{approx_lower_bound}, appear to provide a bound in this instance: this is merely suggestive since it is a single example case, but nonetheless one may hope that the lower bound can be tightened to some function does not scale as $O(1/\epsilon)$.
The bound would still need to account for the fact that in practice we often do see large negative fluctuations if $\epsilon$ is made too small.
Hence one should not hope to eliminate the dependence on $\epsilon$ from the lower bound entirely, but it could have some alternative dependence that accounts for the observed performance.

\section{Conclusion}

Although this work focused on real-time Krylov spaces, since they are the most feasible version of quantum Krylov for noisy quantum computers due to the possibility of low circuit depths, extending to other types of Krylov spaces would be straightforward.
The results in \cref{noise_as_perturbation} on effective Krylov spaces and Hamiltonians only require that the noisy matrix pair is Hermitian (which an appropriate construction can guarantee), and are agnostic to the underlying ideal Krylov space.
For the error bounds in \cref{upper_bound_sec}, the same is true with the exception of \cref{epperly_thm}, which shows existence of an approximate ground state in the effective Krylov space.
Since the following theorems assume access to an approximate ground state with the specific properties of the one given in \cref{epperly_thm}, they would also potentially need to be modified.
However, at least for Krylov spaces spanned by powers of the Hamiltonian, the construction of low-energy states in~\cite{saad1980lanczos} is similar to that of \cref{upper_bound_sec}.
It would be an interesting exercise to modify \cref{upper_bound_sec} to use the construction for polynomials rather than complex exponentials and check whether the results substantively differ.

As discussed in \cref{main_results,numerics}, another direction for future work is to tighten both bounds, but particularly the lower bound.
The lower bound should be the focus because with the typical in-practice choice of threshold proportional to error rate, the lower bound becomes trivial.
On the other hand, at least in some cases of interest, the actual performance of the method does not suffer with this choice, as illustrated in \cref{numerics} as well as prior work~\cite{klymko2022realtime,kirby2023exactefficient,epperly2021subspacediagonalization}.
In contrast, the upper bound's scaling with error rate and threshold might now be optimal, but it is clearly loose in constant factors, which one might hope to improve.

As for impact on users of quantum Krylov algorithms, this work can help to clarify what one should expect from the performance of these methods, at least in terms of scaling.
The practical takeaway of \cref{upper_bound_sec} is that if error rates are small enough for the lowest energy in the Krylov space to be resolved (i.e., for the conditions of \cref{complete_bound_thm} to hold), then one should expect to see an energy versus Krylov dimension curve qualitatively similar to the top panel in \cref{example_fig}.
In practice those conditions may be difficult to evaluate, and it may also be difficult to know \emph{a priori} that the threshold is large enough to avoid negative fluctuations due to ill-conditioning.
However, the existence of conditions on the noise and threshold that guarantee exponential decay with Krylov dimension is still useful because one can then look for the exponential decay as a signature of a successful run.

Prior works that compare the quantum Krylov algorithm with other quantum algorithms for ground state estimation either do so numerically or assume that the true scaling of energy error with respect to input error rate is linear~\cite{klymko2022realtime,kirby2023exactefficient,motta2023subspace}.
The present work's contribution to this dialogue is to confirm the latter, at least for the upper bound and the usual choice of threshold scaling.

Otherwise, the main points regarding comparison to other ground state estimation algorithms remain the same as in prior work, so here we will merely review the highlights.
The primary advantages of the quantum Krylov algorithm are its potential for low-depth circuits using Trotterized time-evolutions, and its noise robustness both in theory and in practice.
Its primary disadvantage with respect to fault-tolerant quantum algorithms like quantum phase estimation and other techniques that achieve the Heisenberg limit (e.g.,~\cite{dong2022groundstate,lin2022heisenberglimited,li2023esprit}) is that the quantum Krylov algorithm uses repeated sampling to estimate the matrix elements.
Hence its error will scale as $O(1/\sqrt{T})$ for total runtime $T$, as opposed the Heisenberg limit of $O(1/T)$ achieved by quantum phase estimation.
Finally, the dependence of \eqref{approx_proj_bound} on the initial state's overlap with the true ground state provides the limitation that prevents the algorithm from efficiently solving QMA-complete problems.
This is in common with nearly all other quantum algorithms for ground state estimation, with the exception of adiabatic state preparation~\cite{farhi2000adiabatic}.

To sum up, in this work we provided a new error analysis for the real-time quantum Krylov algorithm in the presence of noise, using eigenvalue thresholding.
The main advance over prior results~\cite{epperly2021subspacediagonalization} is obtaining linear scaling of the upper bound on the signed energy error, with respect to the noise rate.
This brings the theoretical analysis closer to alignment with the numerics of prior works~\cite{epperly2021subspacediagonalization,klymko2022realtime,kirby2023exactefficient}.
In addition, the technique of expressing error in a Krylov matrix pair in terms an effective Hamiltonian and an effective Krylov space may be more broadly useful.

\begin{acknowledgements}
I am grateful to Nobuyuki Yoshioka, Mario Motta, Antonio Mezzacapo, Kunal Sharma, Minh Tran, Patrick Rall, and Ethan Epperly for helpful conversations.
I owe Ethan a particularly important thank-you for pointing out an error in the first version of this paper, and assisting in correcting it.
I also especially thank Patrick for proofreading not one, but two revisions.
Finally, I thank the anonymous reviewers, who provided useful feedback.
\end{acknowledgements}

% \bibliographystyle{ieeetr}
% \bibliography{references}

\bibliographystyle{quantum}
\bibliography{references}

\appendix

\begin{widetext}

\section{Proofs}
\label[appendix]{proofs_app}

We begin by stating two classic results of matrix analysis, for convenience:
\noindent
\begin{lemma}[version of Weyl's Theorem, originally in~\cite{weyl1912asymptotische}, see also~\cite{bhatia1997matrix}, Cor. III.2.6]
\label{weyl_thm}
    Let $H$ and $H'$ be Hermitian matrices of the same dimensions, and let $E_i,E_i'$ be their eigenvalues in weakly increasing order. Then for any $i$,
    \begin{equation}
        |E_i'-E_i|\le\|H'-H\|.
    \end{equation}
\end{lemma}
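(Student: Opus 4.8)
The plan is to prove this via the Courant--Fischer min-max characterization of the eigenvalues of a Hermitian matrix. Write $n$ for the common dimension and $E_1\le\dots\le E_n$ for the eigenvalues of $H$ in weakly increasing order, and similarly $E_1'\le\dots\le E_n'$ for those of $H'$. The starting point is the identity
\begin{equation}
    E_i=\min_{\dim S=i}\ \max_{x\in S,\ x\neq0}\frac{x^\dagger H x}{x^\dagger x},
\end{equation}
valid for Hermitian $H$, where the minimum is over $i$-dimensional subspaces $S$, together with the same identity with $H$ replaced by $H'$.

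First I would record the elementary bound coming from the definition of the spectral norm: since $H'-H$ is Hermitian, its numerical radius equals $\|H'-H\|$, so for every nonzero $x$,
\begin{equation}
    \left|\frac{x^\dagger(H'-H)x}{x^\dagger x}\right|\le\|H'-H\|,
\end{equation}
which rearranges to $x^\dagger H' x\le x^\dagger H x+\|H'-H\|\,x^\dagger x$ for all $x$.

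Next I would take $S^\star$ to be an $i$-dimensional subspace attaining the outer minimum in the min-max formula for $E_i$. Restricting the inner maximum for $E_i'$ to this particular subspace and applying the previous bound termwise gives
\begin{equation}
    E_i'\le\max_{x\in S^\star,\ x\neq0}\frac{x^\dagger H' x}{x^\dagger x}\le\max_{x\in S^\star,\ x\neq0}\frac{x^\dagger H x}{x^\dagger x}+\|H'-H\|=E_i+\|H'-H\|.
\end{equation}
Interchanging the roles of $H$ and $H'$ gives $E_i\le E_i'+\|H'-H\|$, and the two inequalities together yield $|E_i'-E_i|\le\|H'-H\|$, which is the claim.

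I do not expect a genuine obstacle, since this is a textbook fact and the paper itself only states it for convenience; the one step that warrants care is the justification of the min-max identity, which one can either cite or derive by diagonalizing $H$ and using a dimension count ($i+(n-i+1)>n$) to produce, inside any $i$-dimensional $S$, a nonzero vector lying in the span of the eigenvectors of $H$ associated with its $n-i+1$ largest eigenvalues, then checking equality on the span of the $i$ smallest.
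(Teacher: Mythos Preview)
Your proof via the Courant--Fischer min--max characterization is correct and entirely standard. The paper itself does not prove this lemma at all: it merely states the result with citations to Weyl and Bhatia, so there is no paper proof to compare against; your argument simply fills in what the paper leaves to the literature.
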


\noindent
\begin{lemma}[special case of Davis-Kahan ``$\sin\Theta$ theorem,'' originally in~\cite{davis1970rotation}, see also~\cite{bhatia1997matrix}, Thm. VII.3.1]
\label{davis_kahan}
    Let $H$ and $H'$ be Hermitian matrices of the same dimensions, and let $\Pi_K,\Pi_{K'}'$ be their spectral projectors onto subsets $K$ and $K'$ of the real line that are separated by a gap $\delta>0$, i.e., there exists $a\in\mathds{R}$ such that (without loss of generality) $k\le a$ and $k'\ge a+\delta$ for all $k\in K,k'\in K'$. Then
    \begin{equation}
        \|\Pi_K\Pi_{K'}'\|\le\frac{\|H'-H\|}{\delta}.
    \end{equation}
\end{lemma}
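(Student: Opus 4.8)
The plan is to reduce the inequality to the solution of a Sylvester equation relating the product $Y:=\Pi_K\Pi_{K'}'$ to the perturbation $E:=H'-H$, and then to bound the inverse of the associated Sylvester operator using only the spectral gap $\delta$. Write $P:=\Pi_K$ and $Q:=\Pi_{K'}'$. Since $P$ is a spectral projector of $H$ it commutes with $H$, and likewise $Q$ commutes with $H'$; moreover the restriction of $H$ to $\mathrm{ran}\,P$ has spectrum contained in $K\subseteq(-\infty,a]$, while the restriction of $H'$ to $\mathrm{ran}\,Q$ has spectrum contained in $K'\subseteq[a+\delta,\infty)$.

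First I would derive the governing identity. Using $HP=PH$ and $H'Q=QH'$,
\begin{equation}
HY-YH'=HPQ-PQH'=PHQ-PH'Q=-P(H'-H)Q=-PEQ.
\end{equation}
Because $PY=YQ=Y$, I may regard $Y$ as a linear map from $\mathrm{ran}\,Q$ into $\mathrm{ran}\,P$ and replace $H,H'$ by their self-adjoint restrictions $A$ (spectrum $\le a$) and $B$ (spectrum $\ge a+\delta$) to these subspaces, turning the identity into the Sylvester equation $AY-YB=-PEQ$.

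Next I would invert this Sylvester operator with an explicit integral representation. Shifting by $c:=a+\delta/2$ leaves the left-hand side unchanged and produces operators $A':=A-c$ with spectrum $\le-\delta/2$ and $B':=B-c$ with spectrum $\ge\delta/2$, so that $A'Y-YB'=-PEQ$. The solution is
\begin{equation}
Y=\int_0^\infty e^{tA'}\,(PEQ)\,e^{-tB'}\,dt,
\end{equation}
as one checks by recognizing the integrand as $\frac{d}{dt}\big(e^{tA'}(PEQ)e^{-tB'}\big)$ and applying the fundamental theorem of calculus, the boundary term at $t=\infty$ vanishing because $A',B'$ are self-adjoint with $\|e^{tA'}\|\le e^{-t\delta/2}$ and $\|e^{-tB'}\|\le e^{-t\delta/2}$. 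Taking norms and using $\|PEQ\|\le\|E\|$ gives
\begin{equation}
\|Y\|\le\|E\|\int_0^\infty e^{-t\delta}\,dt=\frac{\|E\|}{\delta},
\end{equation}
which is the claimed bound once we recall $Y=\Pi_K\Pi_{K'}'$ and $E=H'-H$.

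The main obstacle will be the operator-norm control of the inverse Sylvester operator: diagonalizing in the two eigenbases only yields entrywise identities $Y_{ij}=(PEQ)_{ij}/(\mu_i-\nu_j)$ with $|\mu_i-\nu_j|\ge\delta$, which does not by itself give a clean operator-norm estimate. The integral representation is precisely what upgrades this entrywise separation to the sharp factor $1/\delta$ in operator norm, and the decay bounds $\|e^{tA'}\|,\|e^{-tB'}\|\le e^{-t\delta/2}$ (valid because the restricted operators are Hermitian) are the crucial input. The only points needing care are the direction of the spectral inequalities and the sign in the Sylvester identity, both of which are settled once the shift by $c$ is in place.
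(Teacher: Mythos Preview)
The paper does not prove this lemma; it merely states it as a classical result with citations to Davis--Kahan and Bhatia. Your argument is a correct self-contained proof via the standard Sylvester-equation route (essentially Rosenblum's theorem). Two small points worth tightening: first, to conclude that the integral $\int_0^\infty e^{tA'}(PEQ)e^{-tB'}\,dt$ actually equals $Y=\Pi_K\Pi_{K'}'$, you are implicitly using uniqueness of solutions to $A'X-XB'=-PEQ$ when $\sigma(A')\cap\sigma(B')=\emptyset$, which holds here but should be said; second, the phrase ``recognizing the integrand as $\frac{d}{dt}(\cdots)$'' is slightly garbled---what you mean is that after applying $A'(\cdot)-(\cdot)B'$ under the integral sign one obtains $\frac{d}{dt}\big(e^{tA'}(PEQ)e^{-tB'}\big)$, whose integral over $[0,\infty)$ is $-PEQ$ by the decay estimates. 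With those clarifications the proof is complete and sharp.
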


\noindent
We now proceed to proofs of the results in the main text.\\~\\

\noindent
\textbf{Theorem~\ref{Hprime_diff_thm_lower}}.
\emph{
    Let the unitary $G$ in the definition \eqref{Vprime_def_lower} of $\textbf{V}\,'$ be defined such that
    \begin{equation}
    \label{G_def_app}
        \sqrt{\textbf{S}}\,\Pi'=G\sqrt{\Pi'\textbf{S}\,\Pi'}
    \end{equation}
    is the polar decomposition of $\sqrt{\textbf{S}}\,\Pi'$.
    Assume that $\|\textbf{S}\,'-\textbf{S}\,\|\le\epsilon$.
    Then for $H'$ as defined in \eqref{Hprime_def},
    \begin{equation}
        \|H'-H\|\le\frac{\|\textbf{H}\,'-\textbf{H}\,\|+(1+\sqrt{2})\|\textbf{S}\,'-\textbf{S}\,\|\|H\|}{\epsilon}.
    \end{equation}
}
\begin{proof}

First, note that \eqref{G_def_app} is a valid polar decomposition of $\sqrt{\textbf{S}}\Pi'$ because
\begin{equation}
\label{polar_decomp_remark_eq}
    (\sqrt{\textbf{S}}\Pi')^\dagger(\sqrt{\textbf{S}}\Pi')=\Pi'\textbf{S}\Pi'.
\end{equation}
The singular value decomposition of $\sqrt{\textbf{S}}\Pi'$ is
\begin{equation}
    \sqrt{\textbf{S}}\Pi'=UDV^\dagger
\end{equation}
for some unitaries $U,V$ and diagonal, nonnegative $D$.
The polar decomposition can be constructed from this as
\begin{equation}
\label{polar_decomp_remark_eq_2}
    \sqrt{\textbf{S}}\Pi'=\underbrace{(UV^\dagger)}_{G}(VDV^\dagger),
\end{equation}
and hence
\begin{equation}
    (\sqrt{\textbf{S}}\Pi')^\dagger(\sqrt{\textbf{S}}\Pi')=(VDV^\dagger)^2.
\end{equation}
Thus since $VDV^\dagger$ is p.s.d., by \eqref{polar_decomp_remark_eq}
\begin{equation}
    VDV^\dagger=\sqrt{\Pi'\textbf{S}\Pi'}.
\end{equation}
Inserting this into \eqref{polar_decomp_remark_eq_2} yields \eqref{G_def_app}.

Proceeding to the main proof, for convenience we repeat the definition \eqref{Hprime_def} of $H'$:
\begin{equation}
\label{Hprime_def_app}
    H'=H+\textbf{V}'\textbf{S}''^+\left(\textbf{H}'-\textbf{V}'^\dagger H\textbf{V}'\right)\textbf{S}''^+\textbf{V}'^\dagger.
\end{equation}
Subtracting $H$ from both sides of \eqref{Hprime_def_app}, we have
\begin{equation}
\label{H_prime_diff_1}
    \|H'-H\|=\|\textbf{V}'\textbf{S}''^+\left(\textbf{H}'-\textbf{V}'^\dagger H\textbf{V}'\right)\textbf{S}''^+\textbf{V}'^\dagger\|.
\end{equation}
Inserting the definition \eqref{Vprime_def_lower} of $\textbf{V}'\coloneqq FG\sqrt{\textbf{S}''}$ yields
\begin{equation}
\label{H_prime_diff_2}
\begin{split}
    \|H'-H\|&=\|FG\sqrt{\textbf{S}''}\textbf{S}''^+\left(\textbf{H}'-\textbf{V}'^\dagger H\textbf{V}'\right)\textbf{S}''^+\sqrt{\textbf{S}''}G^\dagger F^\dagger\|\\
    &=\|\sqrt{\textbf{S}''^+}\left(\textbf{H}'-\textbf{V}'^\dagger H\textbf{V}'\right)\sqrt{\textbf{S}''^+}\|,
\end{split}
\end{equation}
where the second step follows because $F$ has orthonormal columns and $G$ is unitary.
Next, using ${\textbf{H}\coloneqq\textbf{V}^\dagger H\textbf{V}}$,
\begin{equation}
\label{H_prime_diff_3_lower}
\begin{split}
    \|H'-H\|&\le\|\sqrt{\textbf{S}''^+}\left(\textbf{H}'-\textbf{H}\right)\sqrt{\textbf{S}''^+}\|+\|\sqrt{\textbf{S}''^+}\left(\textbf{V}^\dagger H\textbf{V}-\textbf{V}'^\dagger H\textbf{V}'\right)\sqrt{\textbf{S}''^+}\|\\
    &\le\|\sqrt{\textbf{S}''^+}\|\|\textbf{H}'-\textbf{H}\|\|\sqrt{\textbf{S}''^+}\|+\|\sqrt{\textbf{S}''^+}\textbf{V}^\dagger H\left(\textbf{V}-\textbf{V}'\right)\sqrt{\textbf{S}''^+}\|+\|\sqrt{\textbf{S}''^+}\left(\textbf{V}^\dagger-\textbf{V}'^\dagger\right)H\textbf{V}'\sqrt{\textbf{S}''^+}\|.
\end{split}
\end{equation}

We upper bound the three terms in \eqref{H_prime_diff_3_lower} separately.
The first term in \eqref{H_prime_diff_3_lower} is upper bounded as
\begin{equation}
\label{first_term}
    \|\sqrt{\textbf{S}''^+}\|\|\textbf{H}'-\textbf{H}\|\|\sqrt{\textbf{S}''^+}\|\le\frac{\|\textbf{H}'-\textbf{H}\|}{\epsilon},
\end{equation}
since the smallest nonzero eigenvalue of $\textbf{S}''$ is at least $\epsilon$, by construction \eqref{Sprimeprime_def}.

The second term in \eqref{H_prime_diff_3_lower} is upper bounded as
\begin{equation}
\label{second_term}
\begin{split}
    \|\sqrt{\textbf{S}''^+}\textbf{V}^\dagger H\left(\textbf{V}-\textbf{V}'\right)\sqrt{\textbf{S}''^+}\|&=\|\sqrt{\textbf{S}''^+}\textbf{V}^\dagger H\left(\textbf{V}-\textbf{V}'\right)\Pi'\sqrt{\textbf{S}''^+}\|\\
    &\le\|\sqrt{\textbf{S}''^+}\textbf{V}^\dagger\|\|H\|\|\textbf{V}\Pi'-\textbf{V}'\Pi'\|\|\sqrt{\textbf{S}''^+}\|\\
    &\le\frac{\|\sqrt{\textbf{S}''^+}\textbf{V}^\dagger\|\|H\|\|\textbf{V}\Pi'-\textbf{V}'\Pi'\|}{\sqrt{\epsilon}},
\end{split}
\end{equation}
where the first step follows because $\sqrt{\textbf{S}''^+}$ has the same range as $\textbf{S}''$, and hence $\Pi'\sqrt{\textbf{S}''^+}=\sqrt{\textbf{S}''^+}$, and the last step follows because the smallest nonzero eigenvalue of $\textbf{S}''$ is at least $\epsilon$.
Continuing, we insert the polar decompositions of $\textbf{V}$ and $\textbf{V}'$ as in \eqref{V_polar} and \eqref{Vprime_def_lower}, to obtain
\begin{equation}
\label{second_term_2}
\begin{split}
    \|\sqrt{\textbf{S}''^+}\textbf{V}^\dagger H\left(\textbf{V}-\textbf{V}'\right)\sqrt{\textbf{S}''^+}\|&\le\frac{\|\sqrt{\textbf{S}''^+}\sqrt{\textbf{S}}F^\dagger\|\|H\|\|F\sqrt{\textbf{S}}\Pi'-FG\sqrt{\textbf{S}''}\Pi'\|}{\sqrt{\epsilon}}\\
    &=\frac{\|\sqrt{\textbf{S}''^+}\sqrt{\textbf{S}}\|\|H\|\|G^\dagger\sqrt{\textbf{S}}\Pi'-\sqrt{\textbf{S}''}\Pi'\|}{\sqrt{\epsilon}},
\end{split}
\end{equation}
where the second step follows because $F$ has orthonormal columns and $G$ is unitary.

We now bound the factors in the numerator of \eqref{second_term_2} separately.
For ${\|\sqrt{\textbf{S}''^+}\sqrt{\textbf{S}}\|}$, using the fact ${\|A\|=\sqrt{\|AA^\dagger\|}}$ for the spectral norm and any matrix $A$,
\begin{equation}
\label{approx_inv}
\begin{split}
    \|\sqrt{\textbf{S}''^+}\sqrt{\textbf{S}}\|&=\sqrt{\|\sqrt{\textbf{S}''^+}\textbf{S}\sqrt{\textbf{S}''^+}\|}\\
    &\le\sqrt{\|\sqrt{\textbf{S}''^+}\textbf{S}''\sqrt{\textbf{S}''^+}\|+\|\sqrt{\textbf{S}''^+}(\textbf{S}-\textbf{S}'')\sqrt{\textbf{S}''^+}\|}\\
    &=\sqrt{1+\|\sqrt{\textbf{S}''^+}(\textbf{S}-\textbf{S}')\sqrt{\textbf{S}''^+}\|}\\
    &\le\sqrt{1+\|\sqrt{\textbf{S}''^+}\|\|\textbf{S}'-\textbf{S}\|\|\sqrt{\textbf{S}''^+}\|}\\
    &\le\sqrt{1+\frac{\|\textbf{S}'-\textbf{S}\|}{\epsilon}}\\
    &\le\sqrt{2},
\end{split}
\end{equation}
where the third line follows because $\textbf{S}''=\Pi'\textbf{S}'\Pi'$ and $\Pi'\sqrt{\textbf{S}''^+}=\sqrt{\textbf{S}''^+}\Pi'=\sqrt{\textbf{S}''^+}$ (as discussed above),
the fifth line follows because the smallest nonzero eigenvalue of $\textbf{S}''$ is $\epsilon$, and the final step follows by the assumption in the theorem statement.

For $\|G^\dagger\sqrt{\textbf{S}}\Pi'-\sqrt{\textbf{S}''}\Pi'\|$,
\begin{equation}
\label{split_again}
    \|G^\dagger\sqrt{\textbf{S}}\Pi'-\sqrt{\textbf{S}''}\Pi'\|=\|\sqrt{\Pi'\textbf{S}\Pi'}-\sqrt{\Pi'\textbf{S}''\Pi'}\|,
\end{equation}
by \eqref{G_def_app} and the fact that $\Pi'\textbf{S}''=\textbf{S}''\Pi'=\textbf{S}''$.
We can upper bound this using an inequality of van Hemmen and Ando~\cite[Proposition 3.2]{van_Hemmen1980inequality}, applied only to the submatrices of $\sqrt{\Pi'\textbf{S}\Pi'}$ and $\sqrt{\Pi'\textbf{S}''\Pi'}$ within the range of $\Pi'$ (outside that range $\sqrt{\Pi'\textbf{S}\Pi'}$ and $\sqrt{\Pi'\textbf{S}''\Pi'}$ are zero and thus equal).
With $0$ and $\sqrt{\epsilon}$ being lower bounds on the least eigenvalues of these submatrices, respectively (the former because $\textbf{S}$ is p.s.d.), the inequality~\cite[Proposition 3.2]{van_Hemmen1980inequality} yields
\begin{equation}
\label{sqrt_diff}
\begin{split}
    \|G^\dagger\sqrt{\textbf{S}}\Pi'-\sqrt{\textbf{S}''}\Pi'\|&\le\frac{\|\Pi'\textbf{S}\Pi'-\Pi'\textbf{S}''\Pi'\|}{\sqrt{\epsilon}}\\
    &=\frac{\|\Pi'\textbf{S}\Pi'-\Pi'\textbf{S}'\Pi'\|}{\sqrt{\epsilon}}\\
    &\le\frac{\|\textbf{S}-\textbf{S}'\|}{\sqrt{\epsilon}},
\end{split}
\end{equation}
where the second line follows by \eqref{Sprimeprime_def}.
Inserting \eqref{approx_inv} and \eqref{sqrt_diff} into \eqref{second_term_2} yields the following upper bound on the second term in \eqref{H_prime_diff_3_lower}:
\begin{equation}
\label{second_term_3}
\begin{split}
    \|\sqrt{\textbf{S}''^+}\textbf{V}^\dagger H\left(\textbf{V}-\textbf{V}'\right)\sqrt{\textbf{S}''^+}\|&\le\frac{\sqrt{2}\|H\|\|\textbf{S}-\textbf{S}'\|}{\epsilon}.
\end{split}
\end{equation}

For the third term in \eqref{H_prime_diff_3_lower}, we follow the same derivation as in \eqref{second_term} and \eqref{second_term_2}, just for the adjoints, and obtain
\begin{equation}
\label{third_term}
    \|\sqrt{\textbf{S}''^+}\left(\textbf{V}^\dagger-\textbf{V}'^\dagger\right)H\textbf{V}'\sqrt{\textbf{S}''^+}\|\le\frac{\|\Pi'\sqrt{\textbf{S}}G-\Pi'\sqrt{\textbf{S}''}\|\|H\|\|\sqrt{\textbf{S}''}\sqrt{\textbf{S}''^+}\|}{\sqrt{\epsilon}}.
\end{equation}
This is simpler than \eqref{second_term_2} because $\|\sqrt{\textbf{S}''}\sqrt{\textbf{S}''^+}\|=1$ immediately, and additionally inserting \eqref{sqrt_diff} for the first factor in the numerator (which is the adjoint of the left-hand side in \eqref{sqrt_diff}) yields
\begin{equation}
\label{third_term_2}
    \|\sqrt{\textbf{S}''^+}\left(\textbf{V}^\dagger-\textbf{V}'^\dagger\right)H\textbf{V}'\sqrt{\textbf{S}''^+}\|\le\frac{\|H\|\|\textbf{S}-\textbf{S}'\|}{\epsilon}.
\end{equation}
Inserting the bounds \eqref{first_term}, \eqref{second_term_3}, and \eqref{third_term_2} for all three terms into \eqref{H_prime_diff_3_lower} yields our final bound of
\begin{equation}
    \|H'-H\|\le\frac{\|\textbf{H}'-\textbf{H}\|+(1+\sqrt{2})\|\textbf{S}'-\textbf{S}\|\|H\|}{\epsilon}.
\end{equation}

\end{proof}

\noindent
\textbf{Theorem~\ref{epperly_thm}}~[partly derived from Theorem 3.1 in~\cite{epperly2021subspacediagonalization}].
\emph{
    Let $d$ be a positive integer defining the dimension $D=2d+1$ as above, let $\delta>0$, let $(E_k,|E_k\rangle)$ be the eigenpairs of $H$ in weakly increasing order of energy, and let $R\coloneqq E_\text{max}-E_0$ be the spectral range of $H$.
    Let
    \begin{equation}
        |\psi_0\rangle=\sum_{k=0}^{N-1}\gamma_k|E_k\rangle
    \end{equation}
    be the expansion of $|\psi_0\rangle$ in the energy eigenbasis of $H$, where $N$ is the Hilbert space dimension.
    Assume
    \begin{equation}
    \label{thm1_assumption_app}
        \|\textbf{S}\,'-\textbf{S}\,\|\le\epsilon.
    \end{equation}
    Then there exists an operator $P$ such that the column space of $\textbf{V}\,'$ contains a state
    \begin{equation}
    \label{ansatz_approx_app}
        |\psi\rangle=P|\psi_0\rangle
    \end{equation}
    and $P$ satisfies
    \begin{equation}
    \label{agsp_def_part_1_app}
        P|E_k\rangle=\beta'_k|E_k\rangle,
    \end{equation}
    where
    \begin{equation}
    \label{agsp_def_part_2_app}
        |\beta'_k|^2\le
        \begin{cases}
            2+\alpha_k\quad\text{if $E_k-E_0<\delta$},\\
            8\left(1+\frac{\pi\delta}{R}\right)^{-2d}+\alpha_k\quad\text{if $E_k-E_0\ge\delta$}.
        \end{cases}
    \end{equation}
    The $\alpha_k$ satisfy
    \begin{equation}
    \label{agsp_def_part_3_app}
        \sum_{k=0}^{N-1}|\gamma_k|^2\alpha_k\le2D\left(\epsilon+\|\textbf{S}\,'-\textbf{S}\,\|\right).
    \end{equation}
    The norm of $|\psi\rangle$ is can be lower bounded with or without explicit dependence on $c'$, the coordinates of $|\psi\rangle$ in the column space of $\textbf{V}\,'$:
    \begin{equation}
    \label{psi_norm_lower_bound_app}
    \begin{split}
        &\||\psi\rangle\|^2\ge\|c'\|^2\left(|\gamma_0|^2-\epsilon-\|\textbf{S}\,'-\textbf{S}\,\|\right),\\
        &\||\psi\rangle\|^2\ge|\gamma_0|^2-2\epsilon-2\|\textbf{S}\,'-\textbf{S}\,\|.
    \end{split}
    \end{equation}
}
\begin{proof}

By Lemma 3.3 in~\cite{epperly2021subspacediagonalization}, for any positive integer $d$ and parameter $0<a<\pi$ there exists a degree-$d$ trigonometric polynomial $p^*$ whose magnitude is everywhere upper bounded by $1$, satisfying
\begin{equation}
    p^*(0)=1
\end{equation}
and
\begin{equation}
    |p^*(\theta)|\le2(1+a)^{-d}\quad\text{for all $\theta\in(-\pi,\pi],~|\theta|\ge a$}.
\end{equation}
Let
\begin{equation}
\label{fourier_series}
    p^*\left(\frac{\pi(E-E_0)}{R}\right)=p^*((E-E_0)dt)=\sum_{j=-d}^dc_je^{ijE\,dt}
\end{equation}
be the Fourier transform of $p^*((E-E_0)dt)$, where ${dt\coloneqq\pi/R}$ is defined by the spectral range $R$, so that the full argument ${(E-E_0)dt\in[0,\pi]}$ for all ${E\in[E_0,E_\text{max}]}$.
Choose $a=\delta\,dt=\frac{\pi\delta}{R}$.
By the definition of $p^*$, this implies that
\begin{equation}
\label{preserving}
    \sum_{j=-d}^dc_je^{ijE_0\,dt}=1
\end{equation}
and
\begin{equation}
\label{shrinking}
    \left|\sum_{j=-d}^dc_je^{ijE\,dt}\right|\le2\left(1+\frac{\pi\delta}{R}\right)^{-d}\quad\text{for all}~E\ge E_0+\delta.
\end{equation}
In the ideal Krylov space, our ansatz would be $\textbf{V}c$, which we could show based on the above definitions to be an approximate ground state projector for $H$~\cite{epperly2021subspacediagonalization}.

Instead, we consider a modified set of coordinates in the effective Krylov space: $c'$, defined by
\begin{equation}
\label{cprime_def}
    c'\coloneqq\Pi'c\sqrt{\frac{c'^\dagger\textbf{S}c'}{c'^\dagger\textbf{S}'c'}}=\widetilde{c}\sqrt{\frac{\widetilde{c}\,^\dagger\textbf{S}\widetilde{c}}{\widetilde{c}\,^\dagger\textbf{S}'\widetilde{c}}}\quad\text{for}\quad\widetilde{c}\coloneqq\Pi'c,
\end{equation}
where $\Pi'$ is the projector onto the eigenspaces of $\textbf{S}'$ with eigenvalues above threshold, i.e., onto the range of $\textbf{S}''$.
Note that this choice and its consequences are the main difference from the proof of Theorem 3.1 in~\cite{epperly2021subspacediagonalization}, since in that proof the same coordinates are used in the perturbed Krylov space as in the ideal Krylov space.

In terms of this $c'$, we define the effective Krylov space as in \eqref{Vprime_def}, which we repeat here for convenience:
\begin{equation}
\label{Vprime_def_app}
    \textbf{V}'\coloneqq\sqrt{\frac{c'^\dagger\textbf{S}'c'}{c'^\dagger\textbf{S}c'}}\,\textbf{V}=\sqrt{\frac{c'^\dagger\textbf{S}''c'}{c'^\dagger\textbf{S}c'}}\,\textbf{V},
\end{equation}
where the second equality follows because $\Pi'\textbf{S}''\Pi'=\Pi'\textbf{S}'\Pi'$ by definition \eqref{Sprimeprime_def}.
The ansatz vector is then
\begin{equation}
\label{psi_def_2}
    |\psi\rangle\coloneqq\textbf{V}'c'=\textbf{V}\widetilde{c},
\end{equation}
where the second equality follows from \eqref{cprime_def} and \eqref{Vprime_def_app}.

First we want to lower bound the norm of $|\psi\rangle$: by \eqref{cprime_def} and \eqref{psi_def_2},
\begin{equation}
\label{psi_norm_lower_bound_num_1}
\begin{split}
    \frac{\langle\psi|\psi\rangle}{\|c'\|^2}&=\frac{\widetilde{c}\,^\dagger\textbf{S}\widetilde{c}}{\|c'\|^2}
    =\frac{\widetilde{c}\,^\dagger\textbf{S}'\widetilde{c}}{\|\widetilde{c}\|^2}\\
    &\ge\frac{\widetilde{c}\,^\dagger\textbf{S}'\widetilde{c}}{\|c\|^2}\\
    &\ge\widetilde{c}\,^\dagger\textbf{S}'\widetilde{c}
    =c^\dagger\textbf{S}''c
    =c^\dagger\textbf{S}c+c^\dagger(\textbf{S}''-\textbf{S})c\\
    &\ge c^\dagger\textbf{S}c-\|c\|^2\|\textbf{S}''-\textbf{S}\|\\
    &\ge c^\dagger\textbf{S}c-\|\textbf{S}''-\textbf{S}\|\\
    &\ge c^\dagger\textbf{S}c-\|\textbf{S}''-\textbf{S}'\|-\|\textbf{S}'-\textbf{S}\|\\
    &\ge c^\dagger\textbf{S}c-\epsilon-\|\textbf{S}'-\textbf{S}\|,
\end{split}
\end{equation}
which twice uses $\|c\|^2\le1$ (as argued in~\cite{epperly2021subspacediagonalization}, by Parseval's Theorem and Proposition 3.4 in~\cite{epperly2021subspacediagonalization}).
The final step follows because $\textbf{S}''=\Pi'\textbf{S}'\Pi'$ by definition \eqref{Sprimeprime_def}, and thus $\textbf{S}''-\textbf{S}'$ is supported only on the nullspace of $\textbf{S}''$; the eigenvalues of $\textbf{S}'$ in this subspace lie between $-\|\textbf{S}'-\textbf{S}\|\ge-\epsilon$ and $\epsilon$, with the inequality following by Weyl's theorem (\cref{weyl_thm}) and the fact that $\textbf{S}$ is p.s.d., as well as \eqref{thm1_assumption_app}.

To obtain a lower bound without explicit dependence on $\|c'\|$, we instead use
\begin{equation}
\label{psi_norm_lower_bound_num_2}
\begin{split}
    \langle\psi|\psi\rangle&=\widetilde{c}^\dagger\textbf{S}\widetilde{c}\\
    &=c^\dagger\Pi'\textbf{S}\Pi'c\\
    &=c^\dagger\textbf{S}c+c^\dagger\left(\Pi'\textbf{S}''\Pi'-\textbf{S}\right)c+c^\dagger\left(\Pi'\textbf{S}\Pi'-\Pi'\textbf{S}''\Pi'\right)c\\
    &=c^\dagger\textbf{S}c+c^\dagger\left(\textbf{S}''-\textbf{S}\right)c-c^\dagger\Pi'\left(\textbf{S}''-\textbf{S}\right)\Pi'c\\
    &\ge c^\dagger\textbf{S}c-2\|c\|^2\|\textbf{S}''-\textbf{S}\|\\
    &\ge c^\dagger\textbf{S}c-2\epsilon-2\|\textbf{S}'-\textbf{S}\|,
\end{split}
\end{equation}
using the same upper bound $\|\textbf{S}''-\textbf{S}\|\le\epsilon+\|\textbf{S}'-\textbf{S}\|$ as in \eqref{psi_norm_lower_bound_num_1}.
Finally,
\begin{equation}
    c^\dagger\textbf{S}c=\|\textbf{V}c\|^2=\left\|\sum_{k=0}^{N-1}\gamma_k\sum_{j=-d}^dc_je^{ijE_k\,dt}|E_k\rangle\right\|^2=\sum_{k=0}^{N-1}\left|\gamma_k\sum_{j=-d}^dc_je^{ijE_k\,dt}\right|^2\ge|\gamma_0|^2,
\end{equation}
where the last step follows by \eqref{preserving}. 
Inserting this into \eqref{psi_norm_lower_bound_num_1} and \eqref{psi_norm_lower_bound_num_2} yields the first and second lines of \eqref{psi_norm_lower_bound_app}, respectively.

Next we want to upper bound the coefficients of $|\psi\rangle$ in the energy eigenbasis.
Let
\begin{equation}
\label{psi_0_decomp}
    |\psi_0\rangle=\sum_{k=0}^{N-1}\gamma_k|E_k\rangle
\end{equation}
be the expansion of $|\psi_0\rangle$ in the energy eigenbasis of $H$, and let
\begin{equation}
    |\psi\rangle=\sum_{k=0}^{N-1}\beta'_k\gamma_k|E_k\rangle
\end{equation}
be the expansion of $|\psi\rangle$ in the energy eigenbasis of $H$.
We will be aiming to upper bound the magnitudes of the $\beta'_k$.

By the definition \eqref{exact_krylov_space} of $\textbf{V}$, and using the second form in \eqref{psi_def_2} for $|\psi\rangle$,
\begin{equation}
    |\psi\rangle=\textbf{V}\widetilde{c}=\sum_{j=-d}^d\widetilde{c}_je^{ijH\,dt}|\psi_0\rangle=\sum_{k=0}^{N-1}\gamma_k\underbrace{\sum_{j=-d}^d\widetilde{c}_je^{ijE_k\,dt}}_{\beta'_k}|E_k\rangle,
\end{equation}
where the last step follows by inserting \eqref{psi_0_decomp}.
Hence
\begin{equation}
\label{energy_error_numerator_1}
\begin{split}
    |\beta'_k|^2&=\left|\sum_{j=-d}^d\widetilde{c}_je^{ijE_k\,dt}\right|^2\\
    &=\left|\sum_{j=-d}^dc_je^{ijE_k\,dt}+\sum_{j=-d}^d(\widetilde{c}_j-c_j)e^{ijE_k\,dt}\right|^2\\
    &=\left|\beta_k+\sum_{j=-d}^d(\widetilde{c}_j-c_j)e^{ijE_k\,dt}\right|^2,
\end{split}
\end{equation}
where
\begin{equation}
    \beta_k\coloneqq p^*\left(\frac{\pi(E_k-E_0)}{R}\right),
\end{equation}
and thus $\beta_0=1$ and
\begin{equation}
\label{beta_bounds}
    |\beta_k|\le
    \begin{cases}
        1\quad\text{if $E_k-E_0<\delta$},\\
        2\left(1+\frac{\pi\delta}{R}\right)^{-d}\quad\text{if $E_k-E_0\ge\delta$}.
    \end{cases}
\end{equation}

To continue, it will be useful to introduce compact notations for the components of $\textbf{V}$: we decompose $\textbf{V}$ as
\begin{equation}
    \textbf{V}=
    \underbrace{
    \begin{bmatrix}
        |E_0\rangle&|E_1\rangle&\cdots&|E_{N-1}\rangle
    \end{bmatrix}
    }_{\coloneqq\mathbf{\Psi}}
    \Gamma
    \mathbf{\Phi},
\end{equation}
where $\Gamma\coloneqq\text{diag}(\gamma_0,\gamma_1,...,\gamma_{N-1})$, and $\mathbf{\Phi}$ is the matrix of phases from the time-evolutions of the energy eigenstates $|E_i\rangle$, given by
\begin{equation}
    \mathbf{\Phi}_{kj}\coloneqq e^{ijE_k\,dt}.
\end{equation}
Using this definition and continuing from \eqref{energy_error_numerator_1}, we have
\begin{equation}
\label{energy_error_numerator_2}
\begin{split}
    |\beta'_k|^2&=\left|\beta_k+\mathbf{\Phi}_k(\widetilde{c}-c)\right|^2\\
    &=\left|\beta_k+\mathbf{\Phi}_k(\underbrace{\Pi'-1}_{\coloneqq\Pi'^\perp})c\right|^2\\
    &=\left|\beta_k+\sum_{j=-d}^d[\mathbf{\Phi}\Pi'^\perp]_{kj}c_j\right|^2\\
    &\le2|\beta_k|^2+2\sum_{j=-d}^d|[\mathbf{\Phi}\Pi'^\perp]_{kj}|^2|c_j|^2\\
    &\le2|\beta_k|^2+\underbrace{2\sum_{j=-d}^d|[\mathbf{\Phi}\Pi'^\perp]_{kj}|^2}_{\coloneqq\alpha_k},
\end{split}
\end{equation}
where $\mathbf{\Phi}_k$ denotes the $k$th row of $\mathbf{\Phi}$ and the notation $[\cdot]_{kj}$ denotes $(k,j)$-th entry, and in the last step we again used the fact that $\|c\|\le1$.
Finally, we insert the bounds \eqref{beta_bounds} on $|\beta_k|$, yielding
\begin{equation}
    |\beta'_k|^2\le
    \begin{cases}
        2+\alpha_k\quad\text{if $E_k-E_0<\delta$},\\
        8\left(1+\frac{\pi\delta}{R}\right)^{-2d}+\alpha_k\quad\text{if $E_k-E_0\ge\delta$}.
    \end{cases}
\end{equation}

We now provide a collective upper bound on the $\alpha_k$:
\begin{equation}
\label{energy_error_numerator_3}
\begin{split}
    \sum_{k=0}^{N-1}|\gamma_k|^2\alpha_k&=2\sum_{k=0}^{N-1}\sum_{j=-d}^d|\gamma_k|^2|[\mathbf{\Phi}\Pi'^\perp]_{kj}|^2\\
    &=2\,\text{Tr}\left(\Pi'^\perp\mathbf{\Phi}^\dagger\Gamma^\dagger\Gamma\mathbf{\Phi}\Pi'^\perp\right).
\end{split}
\end{equation}
Now note that since $\mathbf{\Psi}^\dagger\mathbf{\Psi}=\mathds{1}_{D\times D}$,
\begin{equation}
    \textbf{S}=\textbf{V}^\dagger\textbf{V}=\mathbf{\Phi}^\dagger\Gamma^\dagger\mathbf{\Psi}^\dagger\mathbf{\Psi}\Gamma\mathbf{\Phi}=\mathbf{\Phi}^\dagger\Gamma^\dagger\Gamma\mathbf{\Phi},
\end{equation}
so \eqref{energy_error_numerator_3} becomes
\begin{equation}
\label{energy_error_numerator_4}
    \sum_{k=0}^{N-1}|\gamma_k|^2\alpha_k=2\,\text{Tr}\left(\Pi'^\perp\textbf{S}\Pi'^\perp\right)=2\,\sum_{i=0}^{D-1}\lambda_i\text{Tr}\left(\Pi'^\perp\Pi_i\Pi'^\perp\right)=2\,\sum_{i=0}^{D-1}\lambda_i\text{Tr}\left(\Pi_i\Pi'^\perp\Pi_i\right),
\end{equation}
where the second step follows by using idempotence of the middle projector and then the cyclic property of the trace; as a reminder, $\Pi'^\perp$ is the projector onto eigenspaces of $\textbf{S}'$ with eigenvalues below $\epsilon$, $\lambda_i$ are the eigenvalues of $\textbf{S}$ in weakly increasing order, and $\Pi_i$ are defined to be the corresponding spectral projectors.
This expression is a second important difference between the current proof and the proof of Theorem 3.1 in~\cite{epperly2021subspacediagonalization}, since that work the projector $\Pi'^\perp$ is replaced by a projector in the eigenbasis of $\textbf{S}$ itself, so the sum in \eqref{energy_error_numerator_4} terminates exactly at the largest value of $i$ such that $\lambda_i<\epsilon$.
Since in our case $\Pi'^\perp$ is a projector in the eigenbasis of $\textbf{S}'$, it is only approximately a projector in the eigenbasis of $\textbf{S}$, and we must upper bound all terms in the sum in \eqref{energy_error_numerator_4}.

The $\Pi_i$ are rank-one projectors, so we can further simplify to
\begin{equation}
    \sum_{k=0}^{N-1}|\gamma_k|^2\alpha_k=2\,\sum_{i=0}^{D-1}\lambda_i\|\Pi_i\Pi'^\perp\Pi_i\|.
\end{equation}
Next note that
\begin{equation}
    \|\Pi'^\perp\Pi_i\Pi'^\perp\|=\|\Pi_i\Pi'^\perp\|^2,
\end{equation}
by definition of $\|\Pi_i\Pi'^\perp\|$, so
\begin{equation}
    \sum_{k=0}^{N-1}|\gamma_k|^2\alpha_k=2\,\sum_{i=0}^{D-1}\lambda_i\|\Pi_i\Pi'^\perp\|^2\le2D\epsilon+2\,\sum_{i=I}^{D-1}(\lambda_i-\epsilon)\|\Pi_i\Pi'^\perp\|^2,
\end{equation}
where $I$ is the least integer such that $\lambda_I\ge\epsilon$.
Further define $J$ to be the least integer such that $\lambda_J\ge\epsilon+\|\textbf{S}'-\textbf{S}\|$:
\begin{equation}
    \sum_{k=0}^{N-1}|\gamma_k|^2\alpha_k\le2D\epsilon+2\,\sum_{i=I}^{J-1}(\lambda_i-\epsilon)\|\Pi_i\Pi'^\perp\|^2+2\,\sum_{i=J}^{D-1}(\lambda_i-\epsilon)\|\Pi_i\Pi'^\perp\|^2,
\end{equation}
Next we upper bound each $\|\Pi_i\Pi'^\perp\|^2$ in the last sum above using Davis-Kahan (\cref{davis_kahan}), yielding
\begin{equation}
\label{davis_kahan_application}
\begin{split}
    \sum_{k=0}^{N-1}|\gamma_k|^2\alpha_k&\le2D\epsilon+2\,\sum_{i=I}^{J-1}(\lambda_i-\epsilon)\|\Pi_i\Pi'^\perp\|^2+2\,\sum_{i=J}^{D-1}(\lambda_i-\epsilon)\frac{\|\textbf{S}'-\textbf{S}\|^2}{(\lambda_i-\epsilon)^2}\\
    &\le2D\epsilon+2(J-I)\|\textbf{S}'-\textbf{S}\|+2\|\textbf{S}'-\textbf{S}\|^2\sum_{i=J}^{D-1}\frac{1}{\lambda_i-\epsilon}\\
    &\le2D\epsilon+2(J-I)\|\textbf{S}'-\textbf{S}\|+2(D-J)\|\textbf{S}'-\textbf{S}\|\\
    &\le2D\left(\epsilon+\|\textbf{S}'-\textbf{S}\|\right),
\end{split}
\end{equation}
where the second step uses $\|\textbf{S}'-\textbf{S}\|$ to upper bound $\lambda_i-\epsilon$ with $i<J$, and the third step uses $\|\textbf{S}'-\textbf{S}\|$ to lower bound $\lambda_i-\epsilon$ with $i\ge J$, both by definition of $J$.

\end{proof}

\noindent
\textbf{Theorem~\ref{approx_projector_perturbed_H_thm}}.
\emph{
    Let $H$ and $H'$ be Hamiltonians.
    Let $E_0$ be the ground state energy of $H$, which we want to estimate.
    Let
    \begin{equation}
    \label{psi_def_app}
        |\psi\rangle=P|\psi_0\rangle
    \end{equation}
    be the approximately projected state defined in the statement of \cref{epperly_thm}.
    Let
    \begin{equation}
    \label{deltatilde_def_app}
        \widetilde{\Delta}\coloneqq E_1'-E_0
    \end{equation}
    be the gap between the ground state energy of $H$ and the first excited energy of $H'$, and let $\mathbf{1}$ denote the indicator function, i.e.,
    \begin{equation}
        \mathbf{1}(\delta'>\widetilde{\Delta})=
        \begin{cases}
            1\quad\text{if $\delta'>\widetilde{\Delta}$},\\
            0\quad\text{if $\delta'\le\widetilde{\Delta}$}.
        \end{cases}
    \end{equation}
    Then the error (as an estimate of $E_0$) of the expectation value of $H'$ with respect to $|\psi\rangle$ is upper bounded as
    \begin{equation}
    \label{approx_proj_bound_3_app}
        \frac{\langle\psi|(H'-E_0)|\psi\rangle}{\langle\psi|\psi\rangle}\le\delta'\mathbf{1}(\delta'>\widetilde{\Delta})+\|H'-H\|+6\|H\|\left(\frac{\|H'-H\|}{\delta'-\delta}+\frac{\zeta}{\||\psi\rangle\|^2}+\frac{8}{\||\psi\rangle\|^2}\left(1+\frac{\pi\delta}{R}\right)^{-2d}\right),
    \end{equation}
    where 
    \begin{equation}
    \label{zeta_def_app}
        \zeta\coloneqq2D\left(\epsilon+\|\textbf{S}\,'-\textbf{S}\,\|\right)
    \end{equation}
    and the bound holds for any parameters $0<\delta<\delta'<\|H\|$, provided
    \begin{equation}
    \label{thm3_assumption_app}
        \|H'-H\|<\delta'-\delta.
    \end{equation}
}
\begin{proof}

Let $\langle\cdot\rangle$ denote expectation value with respect to $|\psi\rangle$:
\begin{equation}
    \langle\hat{O}\rangle\coloneqq\frac{\langle\psi|\hat{O}|\psi\rangle}{\langle\psi|\psi\rangle}.
\end{equation}
Let $E_i$ and $E_i'$ be eigenvalues of $H$ and $H'$, respectively, in weakly increasing order.
Then if we define $\Pi'_K$ to be the spectral projector of $H'$ onto energies in a set or interval $K$ (and $\Pi_K$ similarly for $H$),
\begin{equation}
    \langle H'-E_0\rangle=\sum_{i=0}^{J-1}(E'_i-E_0)\langle\Pi_{\{E'_i\}}'\rangle=\sum_{i=0}^{I-1}(E'_i-E_0)\langle\Pi'_{\{E'_i\}}\rangle+\sum_{i=I}^{J-1}(E'_i-E_0)\langle\Pi'_{\{E'_i\}}\rangle,
\end{equation}
where $J$ is the number of distinct eigenvalues of $H'$ and $I$ is defined to be the least integer such that $E'_i\ge E_0+\delta'$.

To evaluate the first sum above, we must consider several cases.
First, if $E_0+\delta'\le E_0'$, then $I=0$ by definition and there are no terms in the first sum.
If that condition does not hold, but $E_0+\delta'\le E_1'$, then $I=1$ and the first sum is upper bounded by $E_0'-E_0\le\|H'-H\|$, with that inequality following by Weyl's theorem (\cref{weyl_thm}).
Finally, if $E_0+\delta'>E_1'$, then the first sum includes projectors onto excited states of $H'$ with energies up to $E_0+\delta'$, so it is upper bounded by $\delta'$ since that upper bounds each energy error in the corresponding low-energy subspace of $H'$.
Thus we obtain
\begin{equation}
\label{step_1}
\begin{split}
    \langle H'-E_0\rangle&\le\delta'\mathbf{1}(\delta'>E_1'-E_0)+\|H'-H\|\mathbf{1}(E_0'-E_0<\delta'\le E_1'-E_0)+\sum_{i=I}^{J-1}(E'_i-E_0)\langle\Pi'_{\{E'_i\}}\rangle\\
    &\le\delta'\mathbf{1}(\delta'>\widetilde{\Delta})+\|H'-H\|+(E'_{J-1}-E_0)\sum_{i=I}^{J-1}\langle\Pi'_{\{E'_i\}}\rangle\\
    &=\delta'\mathbf{1}(\delta'>\widetilde{\Delta})+\|H'-H\|+(E'_{J-1}-E_0)\langle\Pi'_{[E'_I,\infty)}\rangle,
\end{split}
\end{equation}
inserting the definition \eqref{deltatilde_def_app} of $\widetilde{\Delta}$ in the second step.

We now partition $|\psi\rangle$ into its components in the energy eigenspaces of $H$ above and below $E_0+\delta$:
\begin{equation}
\label{psi_partition}
    |\psi\rangle=|\psi_{\le E_0+\delta}\rangle+|\psi_{>E_0+\delta}\rangle.
\end{equation}
For $|\psi_{\le E_0+\delta}\rangle$, we simply upper bound its magnitude by the magnitude of $|\psi\rangle$.
By \eqref{agsp_def_part_1_app}, the square magnitude of $|\psi_{>E_0+\delta}\rangle$ is
\begin{equation}
\label{partition_bounds}
\begin{split}
    \||\psi_{>E_0+\delta}\rangle\|^2&=\sum_{E_k>E_0+\delta}|\gamma_k|^2|\beta'_k|^2\\
    &\le\sum_{E_k>E_0+\delta}|\gamma_k|^2\left(8\left(1+\frac{\pi\delta}{R}\right)^{-2d}+\alpha_k\right)\\
    &\le8\left(1+\frac{\pi\delta}{R}\right)^{-2d}\sum_{E_k>E_0+\delta}|\gamma_k|^2+\zeta\\
    &\le8\left(1+\frac{\pi\delta}{R}\right)^{-2d}+\zeta,
\end{split}
\end{equation}
where the second step uses \eqref{agsp_def_part_2_app} to fill in the $|\beta'_k|^2$, and the third step inserts \eqref{agsp_def_part_3_app} and then simplifies using \eqref{zeta_def_app}.

We now use the partition \eqref{psi_partition} to bound
\begin{equation}
\label{psi_high_energy_bound}
\begin{split}
    \langle\psi|\Pi'_{[E'_I,+\infty)}|\psi\rangle=\,&\langle\psi_{\le E_0+\delta}|\Pi'_{[E'_I,+\infty)}|\psi_{\le E_0+\delta}\rangle\\
    &+\langle\psi_{>E_0+\delta}|\Pi'_{[E'_I,+\infty)}|\psi_{\le E_0+\delta}\rangle\\
    &+\langle\psi_{\le E_0+\delta}|\Pi'_{[E'_I,+\infty)}|\psi_{>E_0+\delta}\rangle\\
    &+\langle\psi_{>E_0+\delta}|\Pi'_{[E'_I,+\infty)}|\psi_{>E_0+\delta}\rangle\\
    \le\,&\||\Pi'_{[E'_I,+\infty)}|\psi_{\le E_0+\delta}\rangle\|^2+2\||\Pi'_{[E'_I,+\infty)}|\psi_{\le E_0+\delta}\rangle\|\||\psi_{>E_0+\delta}\rangle\|+\||\psi_{>E_0+\delta}\rangle\|^2\\
    \le\,&2\||\Pi'_{[E'_I,+\infty)}|\psi_{\le E_0+\delta}\rangle\|^2+2\||\psi_{>E_0+\delta}\rangle\|^2,
\end{split}
\end{equation}
with the last step following by Young's inequality.
Eq.~\eqref{partition_bounds} already yields a bound for the second term inside the square, and we can bound the first term as follows:
\begin{equation}
\begin{split}
    \|\Pi'_{[E'_I,+\infty)}|\psi_{\le E_0+\delta}\rangle\|^2&=\langle\psi_{\le E_0+\delta}|\Pi'_{[E'_I,+\infty)}|\psi_{\le E_0+\delta}\rangle\\
    &=\langle\psi_{\le E_0+\delta}|\Pi_{(-\infty,E_0+\delta]}\Pi'_{[E'_I,+\infty)}\Pi_{(-\infty,E_0+\delta]}|\psi_{\le E_0+\delta}\rangle\\
    &\le\langle\psi_{\le E_0+\delta}|\psi_{\le E_0+\delta}\rangle\|\Pi_{(-\infty,E_0+\delta]}\Pi'_{[E'_I,+\infty)}\Pi_{(-\infty,E_0+\delta]}\|\\
    &\le\langle\psi|\psi\rangle\frac{\|H'-H\|}{E'_I-E_0-\delta},
\end{split}
\end{equation}
where the last step follows by Davis-Kahan (\cref{davis_kahan}) and because $\langle\psi_{\le E_0+\delta}|\psi_{\le E_0+\delta}\rangle\le\langle\psi|\psi\rangle$.
We also have by definition of $I$ that
\begin{equation}
    E'_I-E_0-\delta\ge E_0+\delta'-E_0-\delta=\delta'-\delta,
\end{equation}
so
\begin{equation}
    \|\Pi'_{[E'_I,+\infty)}|\psi_{\le E_0+\delta}\rangle\|^2\le\langle\psi|\psi\rangle\frac{\|H'-H\|}{\delta'-\delta}.
\end{equation}
Inserting this and \eqref{partition_bounds} into \eqref{psi_high_energy_bound} yields
\begin{equation}
\label{step_2}
    \langle\Pi'_{[E'_I,+\infty)}\rangle=\frac{\langle\psi|\Pi'_{[E'_I,+\infty)}|\psi\rangle}{\langle\psi|\psi\rangle}\le2\frac{\|H'-H\|}{\delta'-\delta}+\frac{2\zeta}{\langle\psi|\psi\rangle}+\frac{16}{\langle\psi|\psi\rangle}\left(1+\frac{\pi\delta}{R}\right)^{-2d}.
\end{equation}

Returning to \eqref{step_1}, we note that
\begin{equation}
\label{next_to_last}
    E_{J-1}'-E_0\le E_\text{max}-E_0+\|H'-H\|\le2\|H\|+\|H'-H\|<3\|H\|
\end{equation}
with the first step following by Weyl's Theorem (\cref{weyl_thm}) and the last following by \eqref{thm3_assumption_app}.
Note that the last step above does not change the leading order scaling, but if one wanted to obtain a bound with better constant factors, one could use the bound in the second-to-last step.
Inserting \eqref{step_2} and \eqref{next_to_last} into \eqref{step_1} yields our desired energy error bound \eqref{approx_proj_bound_3_app}.

\end{proof}

\noindent
\textbf{Theorem~\ref{complete_bound_thm}.}
\emph{
    Let $H$ be a Hamiltonian, let $(\textbf{H},\textbf{S}\,)=(\textbf{V}\,^\dagger H\textbf{V}, \textbf{V}\,^\dagger\textbf{V}\,)$ be a real-time Krylov matrix pair representing $H$ in the Krylov space $\text{span}(\textbf{V}\,)$, and let $(\textbf{H}\,',\textbf{S}\,')$ be a Hermitian approximation to $(\textbf{H},\textbf{S}\,)$.
    Let $E_0$ be the ground state energy of $H$, which we want to estimate.
    Let $\epsilon>0$ be a regularization threshold, and let
    \begin{equation}
    \label{chi_def_app}
        \chi\coloneqq\|\textbf{H}\,'-\textbf{H}\,\|+\|\textbf{S}\,'-\textbf{S}\,\|\|H\|
    \end{equation}
    be a measure of the noise.
    Let
    \begin{equation}
    \label{perturbed_overlap_app}
        |\gamma_0'|^2\coloneqq|\gamma_0|^2-2\epsilon-2\|\textbf{S}\,'-\textbf{S}\,\|
    \end{equation}
    be a noisy effective version of the initial state's overlap $|\gamma_0|^2$ with the true ground state.
    Let
    \begin{equation}
    \label{perturbed_gap_app}
        \Delta'\coloneqq\Delta-\frac{\chi}{|\gamma_0'|^2}
    \end{equation}
    be a noisy effective version of the spectral gap $\Delta$ of $H$.
    Then the lowest eigenvalue $\widetilde{E}_0$ of the thresholded matrix pair obtained from $(\textbf{H}\,',\textbf{S}\,')$ is bounded as
    \begin{equation}
    \label{approx_proj_bound_app}
        \widetilde{E}_0-E_0\le\delta'\mathbf{1}(\delta'>\Delta')+\frac{\chi}{|\gamma_0'|^2}+\frac{6\|H\|}{|\gamma_0'|^2}\left(\frac{\chi}{\delta'-\delta}+\zeta+8\left(1+\frac{\pi\delta}{2\|H\|}\right)^{-2d}\right)
    \end{equation}
    where $\zeta$ is defined in \eqref{zeta_def_app} and the bound holds for any parameters $\delta'>\delta>0$, provided the following assumptions hold:\\
    (i)
    \begin{equation}
    \label{thm4_assumption_app}
        \frac{\chi}{|\gamma_0'|^2}<\delta'-\delta,
    \end{equation}
    (ii)
    \begin{equation}
    \label{thm4_assumption_2_app}
        \epsilon\ge\|\textbf{S}\,'-\textbf{S}\|,
    \end{equation}
    and (iii) the right-hand side of \eqref{perturbed_overlap_app} is positive.
}

\begin{proof}

First, we note that using the first line in \eqref{psi_norm_lower_bound_app}, we can simplify \eqref{Hprime_diff} to
\begin{equation}
\label{Hprime_diff_pf}
    \|H'-H\|\le\frac{\left\|\textbf{H}'-\textbf{H}\right\|+\|H\|\left\|\textbf{S}'-\textbf{S}\right\|}{|\gamma_0|^2-\epsilon-\|\textbf{S}\,'-\textbf{S}\|}\le\frac{\chi}{|\gamma_0'|^2}.
\end{equation}
Inserting this into \eqref{approx_proj_bound_3_app} yields
\begin{equation}
\label{approx_proj_bound_4}
    \frac{\langle\psi|(H'-E_0)|\psi\rangle}{\langle\psi|\psi\rangle}\le\delta'\mathbf{1}(\delta'>E_1'-E_0)+\frac{\chi}{|\gamma_0'|^2}+6\|H\|\left(\frac{\chi}{(\delta'-\delta)|\gamma_0'|^2}+\frac{\zeta}{\||\psi\rangle\|^2}+\frac{8}{\||\psi\rangle\|^2}\left(1+\frac{\pi\delta}{R}\right)^{-2d}\right).
\end{equation}

Next, we note that the second lower bound on $\||\psi\rangle\|^2$ in \eqref{psi_norm_lower_bound_app} is exactly our definition \eqref{perturbed_overlap} of the effective overlap $|\gamma_0'|^2$, and by Weyl's Theorem (\cref{weyl_thm}), $E_1'-E_0$ is lower bounded as
\begin{equation}
    E_1'-E_0\ge E_1-\|H'-H\|-E_0=\Delta-\|H'-H\|\ge\Delta-\frac{\chi}{|\gamma_0'|^2}=\Delta',
\end{equation}
by definition \eqref{perturbed_gap_app}.
Replacing $\||\psi\rangle\|^2$ and $E_1'-E_0$ in \eqref{approx_proj_bound_4} with $|\gamma_0'|^2$ and $\Delta'$, respectively, to obtain a new upper bound yields
\begin{equation}
\label{approx_proj_bound_5}
    \frac{\langle\psi|(H'-E_0)|\psi\rangle}{\langle\psi|\psi\rangle}\le\delta'\mathbf{1}(\delta'>\Delta')+\frac{\chi}{|\gamma_0'|^2}+\frac{6\|H\|}{|\gamma_0'|^2}\left(\frac{\chi}{\delta'-\delta}+\zeta+8\left(1+\frac{\pi\delta}{R}\right)^{-2d}\right).
\end{equation}
These replacements are guaranteed to be well-defined by our assumption in the theorem statement that the right-hand side of \eqref{perturbed_overlap_app} is positive.

Finally, we note that by the Rayleigh-Ritz variational principle, the lowest energy $\widetilde{E}_0$ of the effective matrix pair $(\textbf{V}'^\dagger H'\textbf{V}',\textbf{V}'^\dagger\textbf{V}')$ is upper bounded as
\begin{equation}
    \widetilde{E}_0\le\frac{\langle\psi|H'|\psi\rangle}{\langle\psi|\psi\rangle}.
\end{equation}
Combining this with \eqref{approx_proj_bound_5}, we have
\begin{equation}
\label{approx_proj_bound_6}
\begin{split}
    \widetilde{E}_0-E_0\le\delta'\mathbf{1}(\delta'>\Delta')+\frac{\chi}{|\gamma_0'|^2}+\frac{6\|H\|}{|\gamma_0'|^2}\left(\frac{\chi}{\delta'-\delta}+\zeta+8\left(1+\frac{\pi\delta}{\|H\|}\right)^{-2d}\right),
\end{split}
\end{equation}
and using $2\|H\|$ to upper bound the spectral range $R$ of $H$, we obtain our desired result \eqref{approx_proj_bound_app}.

Note that replacing $\|H'-H\|$ in the assumption \eqref{thm3_assumption_app} of \cref{approx_projector_perturbed_H_thm} with $\chi/|\gamma_0'|^2$ yields the assumption \eqref{thm4_assumption_app} of the present theorem, which is stronger by \eqref{Hprime_diff_pf} and \eqref{psi_norm_lower_bound}.

\end{proof}

\end{widetext}
~

\end{document}